%% file: neurips_2026.tex
\documentclass{article}

 \usepackage[preprint]{neurips_2026}

\usepackage[utf8]{inputenc} 
\usepackage[T1]{fontenc}    
\usepackage{hyperref}       
\usepackage{url}            
\usepackage{booktabs}       
\usepackage{amsfonts}       
\usepackage{nicefrac}       
\usepackage{microtype}      
\usepackage{xcolor}         

\usepackage{bbm}
\usepackage{enumerate}
\usepackage{algorithm}
\usepackage{algorithmicx}
\usepackage{algpseudocodex}
\usepackage{enumitem}
\algrenewcommand\algorithmicrequire{\textbf{Input:}}
\algrenewcommand\algorithmicensure{\textbf{Output:}}

\usepackage{amsmath}
\usepackage{xcolor}
\definecolor{darkgreen}{rgb}{0,0.5,0}
\usepackage{hyperref}
\hypersetup{
    unicode=false,          
    colorlinks=true,        
    linkcolor=red,          
    citecolor=darkgreen,    
    filecolor=magenta,      
    urlcolor=blue           
}
\usepackage[capitalize, nameinlink]{cleveref}

\usepackage{amsthm}
\usepackage{thm-restate}
\theoremstyle{plain}
\newtheorem{theorem}{Theorem}

\newtheorem{proposition}[theorem]{Proposition}
\theoremstyle{definition}
\newtheorem{definition}[theorem]{Definition}

\theoremstyle{remark}

\crefname{assumption}{Assumption}{Assumptions}
\makeatletter
\AddToHook{cmd/appendix/before}{\def\cref@section@alias{appendix}}
\makeatother

\usepackage{tikz}
\usetikzlibrary{calc, graphs, graphs.standard, shapes, arrows, arrows.meta, positioning, decorations.pathreplacing, decorations.markings, decorations.pathmorphing, fit, matrix, patterns, shapes.misc, tikzmark}

\newcommand{\eps}{\varepsilon}

\title{Network-Based Interventions for HIV Prevention via Cascade-Aware Suppression of Transmission}

\author{
Akseli Kangaslahti\thanks{Equal contribution}\\
Harvard University\\
\texttt{akselikangaslahti@g.harvard.edu}
\And
Davin Choo\footnotemark[1]\\
Harvard University\\
\texttt{davinchoo@seas.harvard.edu}
\And
Milind Tambe\\
Harvard University\\
\texttt{tambe@seas.harvard.edu}
\And
Alastair van Heerden\\
University of Witwatersrand\\
Wits Health Consortium\\
\texttt{alastair.vanheerden@wits.ac.za}
\And
Cheryl Johnson\\
World Health Organization\\
\texttt{johnsonc@who.int}
}

\begin{document}

\maketitle

\begin{abstract}
Treating and preventing Human Immunodeficiency Virus (HIV) remains a critical global health challenge. While antiretroviral therapy provides a path toward viral suppression --- effectively eliminating an individual's transmission risk --- systemic resource constraints limit the reach of intervention efforts. This work addresses the strategic distribution of intensive resources among virally unsuppressed individuals to minimize the expected cascade of new infections within a transmission network. We formalize this challenge as a novel constrained optimization problem where we have resources to ``treat'' $k$ out of a set $\mathbf{P}$ of virally unsuppressed individuals, and establish its theoretical connections to existing computational literature. We then propose Cascade-Aware Suppression of Transmission (\texttt{CAST}), a polynomial-time $(\delta, \epsilon)$-approximation algorithm that achieves a $2\sqrt{|\mathbf{P}|}$ approximation ratio by leveraging connections to the Minimum-$k$-Union (\texttt{MkU}) problem and Hoeffding-style concentration bounds. Extensive evaluations on real-world HIV networks demonstrate that \texttt{CAST} outperforms standard public health and computer science baselines. Furthermore, we show that \texttt{CAST} is empirically robust across diverse infectious disease networks, varied edge probability initializations, and settings involving imperfect network data.
\end{abstract}

\section{Introduction}
\label{intro}

With more than 40 million individuals infected worldwide, Human Immunodeficiency Virus (HIV) remains a critical global health challenge.
While a cure is currently unavailable, proper medical treatment can suppress the virus to levels that effectively eliminate further transmission risk. 
Aligned with UN Sustainable Development Goal 3.3 \cite{UN_SDG3}, UNAIDS proposed the 95-95-95 targets for HIV that aim for 95\% of individuals with HIV to be aware of their status, 95\% of those aware of their positive status to receive treatment, and 95\% of those that are treated to reach viral suppression by 2030 \cite{unaids2022}.
However, public health agencies operate under chronic resource constraints, so optimizing resource allocations for HIV treatment and prevention is essential \cite{sansom2021optimal}.
This is an especially urgent problem as public health agencies that are already constrained are currently facing additional budget cuts for HIV prevention efforts \cite{ten2025impact}.
The World Health Organization (WHO) has recommended network-based strategies to help navigate these resource constraints \cite{who}.

In this work, we address one such resource allocation challenge to support network-based HIV prevention efforts.
Clinical studies have established that virally suppressed individuals do not transmit the virus \cite{cohen2011prevention,rodger2016sexual,bavinton2018viral}, a finding that anchors the ``Undetectable = Untransmittable'' (U = U) campaign \cite{niaid2019hiv,olson2020uu,CDC_UU_2024} and underpins the public health strategy known as Treatment as Prevention (TasP), which leverages individual treatment as a structural tool to reduce population-level incidence \cite{DHS_HIV}.
However, viral suppression can be hindered by a variety of factors, including medication non-adherence and limited healthcare access, which in turn contribute to broader drug resistance challenges.
Though public health agencies can provide some support, such as adherence counseling, viral load monitoring, or high-efficacy treatments, resources are limited.
Thus, when a public health agency identifies a group of individuals that have untreated HIV and/or are virally unsuppressed, they must choose how to strategically allocate these resources to prioritize a limited number of individuals for intensive interventions, with the goal of minimizing the expected cascade of new infections across the community's transmission network.
For example, this scenario can arise when trying to expand services to reach underserved populations.
Furthermore, since collecting information to understand the full network can be expensive and time consuming, the intervention strategy should ideally adapt well to imperfect networks that are less costly and faster to collect.
We focus on developing an algorithmic approach that can help address this problem.
We conduct this research in close collaboration with domain experts from a large university in South Africa as well as a global nonprofit health organization, with the ultimate goal of deploying our approach in field settings.
\cref{fig:field-work} shows HIV prevention field work being conducted by our collaborators.

\begin{figure}
    \centering
    \includegraphics[height=65pt]{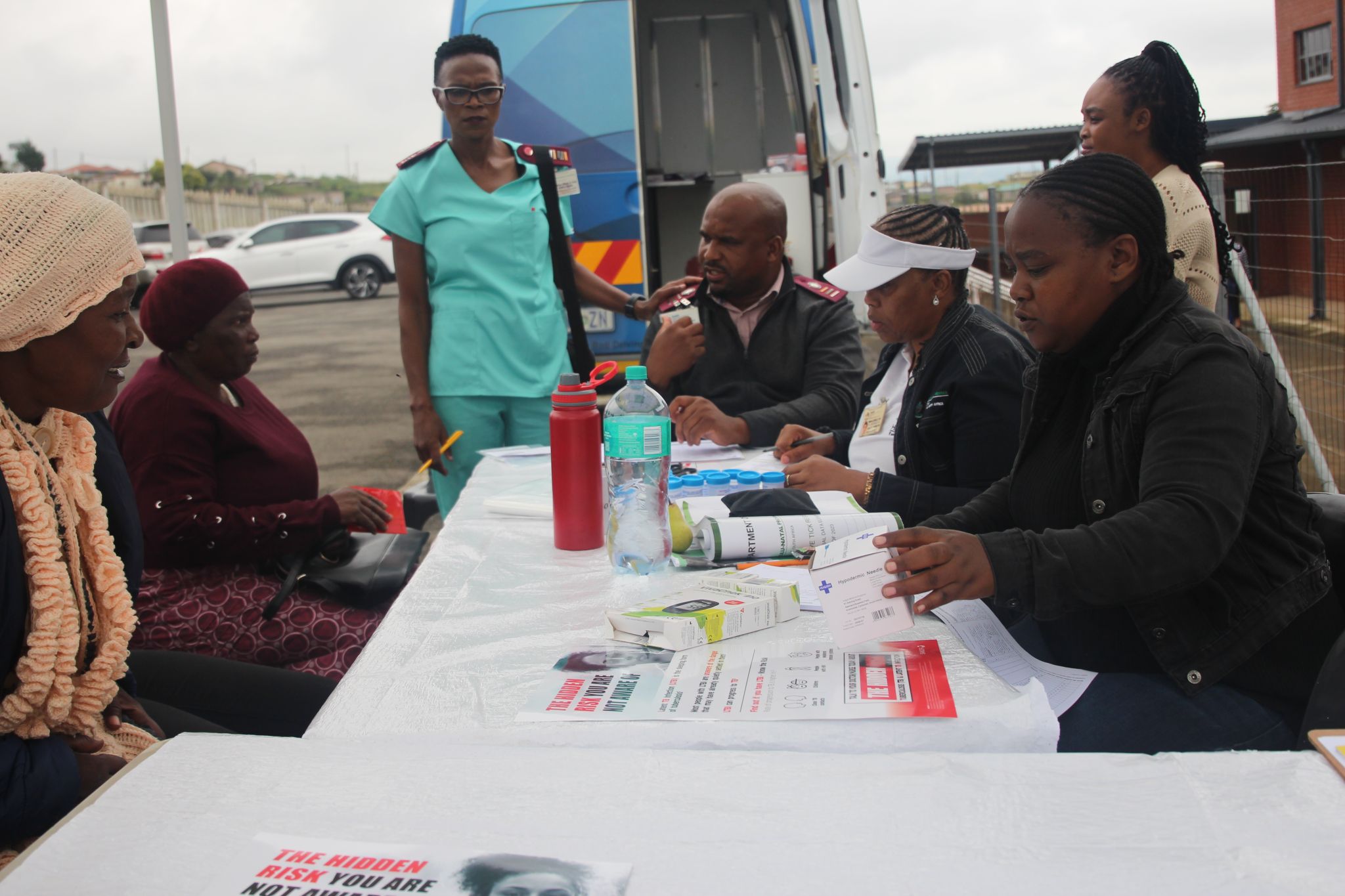}
    \includegraphics[height=65pt]{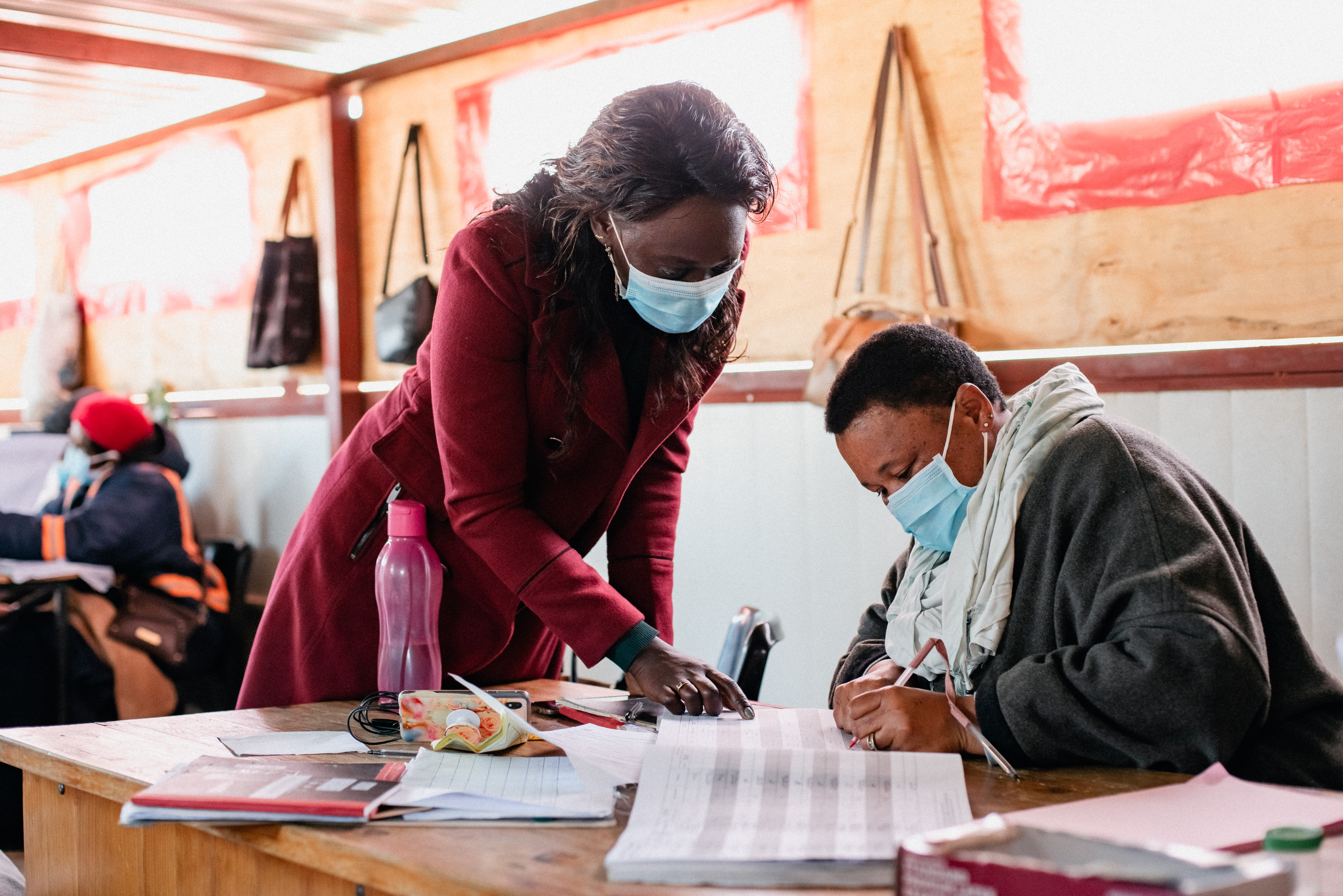}
    \includegraphics[height=65pt]{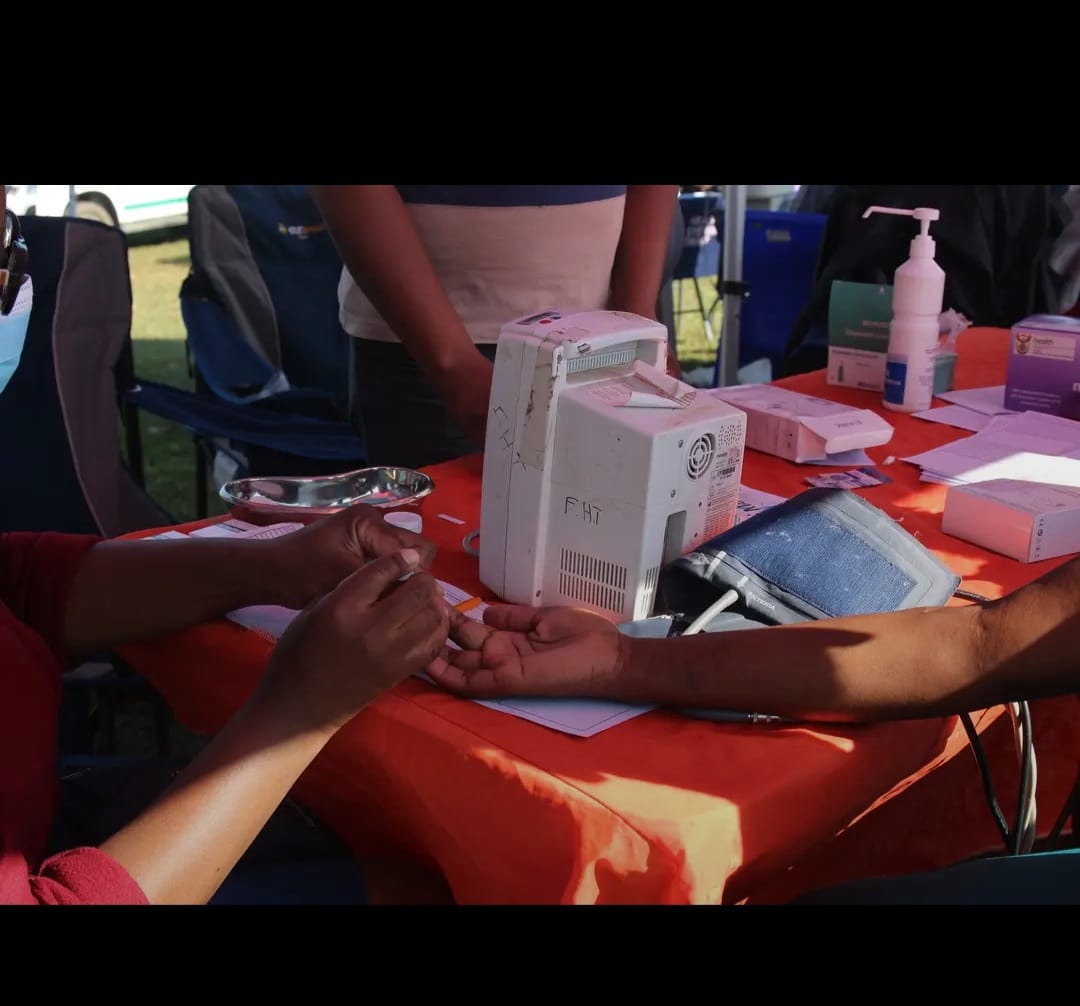}
    \includegraphics[height=65pt]{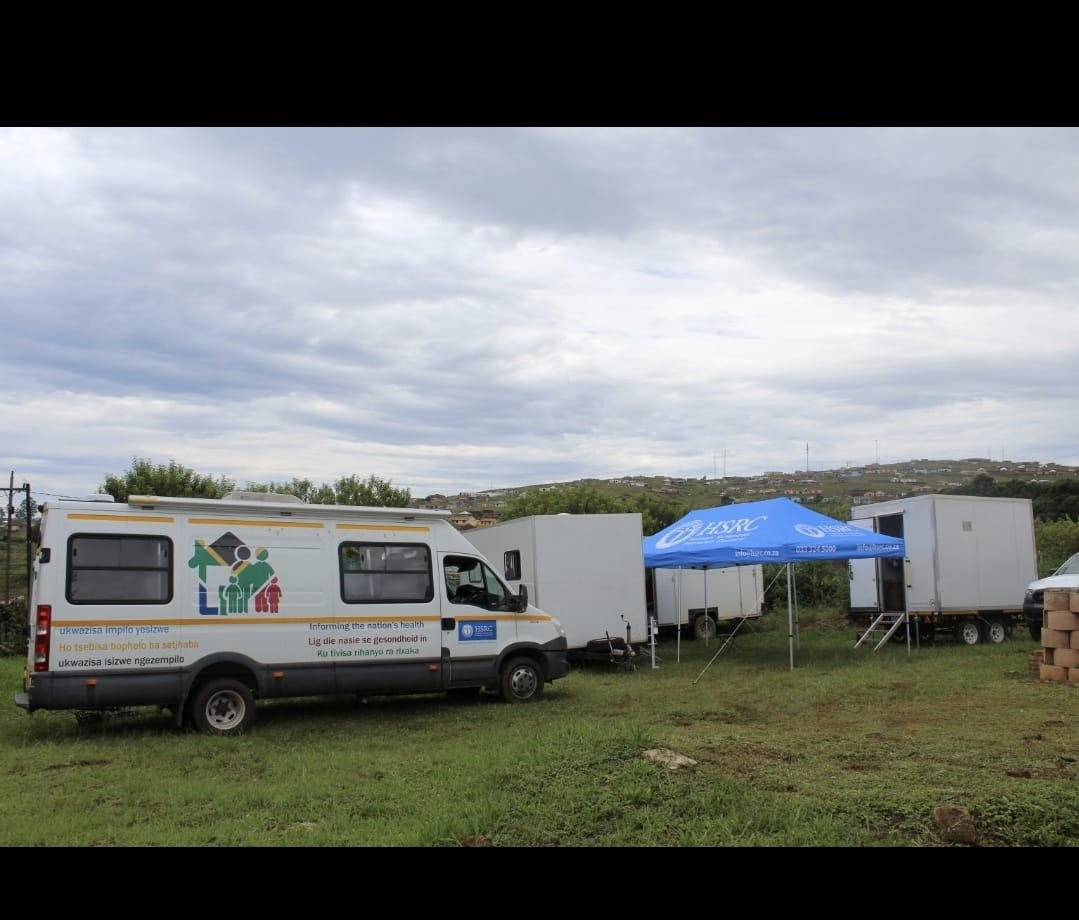}
    \caption{Images from our collaborators conducting HIV prevention field work in South Africa.}
    \label{fig:field-work}
\end{figure}

\textbf{Contributions.}
Our theory and simulations show that our proposed \texttt{CAST} algorithm has strong potential for real-world impact in improving the efficiency of HIV prevention resources, even under more realistic problem instances with imperfect networks.
More specifically:
\begin{enumerate}[leftmargin = *]
    \item \textbf{Intervention Model}:
    We introduce and formalize the Minimizing Infection via Node Treatment (\texttt{MINT}) problem where we have resources to ``treat'' $k$ out of a set $\mathbf{P}$ of virally unsuppressed individuals, a novel optimization problem that captures the objectives and constraints of the real-world limited resource allocation problem described above.
    We analyze and prove important mathematical properties of this problem, showing that its minimization objective is non-negative, monotonic, and submodular, and that \texttt{MINT} reduces to a special case of the more well-studied Influence Minimization (\texttt{IMIN}) problem but remains distinct from Influence Maximization (\texttt{IM}).
    \item \textbf{Cascade-Aware Suppression of Transmission (\texttt{CAST})}:
    For any fixed realization of the stochastic transmission network, we prove that \texttt{MINT} is equivalent to the Minimum-$k$-Union (\texttt{MkU}) problem.
    By combining \texttt{MkU} approximation algorithms with structured sample-based set constructions and Hoeffding-style concentration bounds, we provide \texttt{CAST}, a polynomial-time $(\delta, \epsilon)$-approximation algorithm for the general \texttt{MINT} problem with a ratio of $2\sqrt{|\mathbf{P}|}$ for expected new infections.
    
    \item \textbf{Empirical Evaluation}: We evaluate \texttt{CAST} on public real-world HIV network datasets in simulation, and compare its performance to several baseline algorithms from related public health and computer science literature. We show that \texttt{CAST} consistently outperforms these baselines across a variety of \texttt{MINT} instances and on disease networks beyond HIV.
    \item \textbf{Extension to Imperfect Networks}:
    To address practical deployment challenges regarding network collection and observability, we study the effectiveness of treatment decisions on incomplete networks that are less costly and quicker to record, which we supplement with learned link predictions. Since edge probabilities can also be difficult and expensive to estimate, we further evaluate how \texttt{CAST} adapts to mismatched edge probability initializations. Our empirical simulations show that \texttt{CAST} remains robust to these imperfect network settings.
\end{enumerate}


\section{Preliminaries and Related Work}

\paragraph{Notation.}
We use boldface letters (e.g., $\mathbf{S}$) for sets and calligraphic letters (e.g., $\mathcal{G}$, $\mathcal{H}$) for graphs and random realizations.

\textbf{Concentration bounds.}
We will use standard concentration inequalities to relate empirical estimates to expectations.
In particular, we use the following form of Hoeffding's inequality.

\begin{theorem}[Hoeffding's inequality \cite{hoeffding1963probability}]
\label{thm:hoeffding}
Let $X_1, \ldots, X_n$ be independent random variables such that $a_i \leq X_i \leq b_i$.
Then, for any $\eps > 0$,
\[
\Pr \left( \left| \sum_{i=1}^n X_i - \mathbb{E}[\sum_{i=1}^n X_i] \right| > \eps \right) \leq 2 \exp \left( - \frac{2 \eps^2}{\sum_{i=1}^n (b_i - a_i)^2} \right)
\]
\end{theorem}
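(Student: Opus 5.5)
The plan is the standard Chernoff--Cramér argument: exponentiate, apply Markov's inequality, control each summand's moment generating function with Hoeffding's lemma, and treat the two tails by symmetry. Write $S = \sum_{i=1}^n X_i$ and $\mu = \mathbb{E}[S]$. For the upper tail, fix any $\lambda > 0$. Since $x \mapsto e^{\lambda x}$ is increasing and positive, Markov's inequality gives
\[
\Pr(S - \mu > \eps) \le e^{-\lambda \eps}\, \mathbb{E}\bigl[e^{\lambda (S - \mu)}\bigr] = e^{-\lambda\eps} \prod_{i=1}^n \mathbb{E}\bigl[e^{\lambda (X_i - \mathbb{E}X_i)}\bigr],
\]
where the factorization uses independence of the $X_i$.

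The key step, and the only non-routine one, is Hoeffding's lemma. It states that if $Y$ is a random variable with $\mathbb{E}Y = 0$ and $a \le Y \le b$, then $\mathbb{E}[e^{\lambda Y}] \le \exp(\lambda^2 (b-a)^2/8)$. I would apply it to $Y_i = X_i - \mathbb{E}X_i$, which is supported on an interval of length $b_i - a_i$.

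To prove the lemma, I would first use convexity of $e^{\lambda y}$ on $[a,b]$ to bound $e^{\lambda y}$ by the chord through its values at $a$ and $b$:
\[
e^{\lambda y} \le \frac{b-y}{b-a}\, e^{\lambda a} + \frac{y-a}{b-a}\, e^{\lambda b}.
\]
Taking expectations and using $\mathbb{E}Y = 0$ gives $\mathbb{E}[e^{\lambda Y}] \le e^{L(h)}$. Here $h = \lambda(b-a)$, $p = -a/(b-a) \in [0,1]$, and
\[
L(h) = -hp + \log\bigl(1 - p + p e^{h}\bigr).
\]
I would then check directly that $L(0) = 0$ and $L'(0) = 0$. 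Next I would show $L''(h) = q(1-q) \le 1/4$, where $q = p e^h / (1 - p + p e^h)$. Taylor's theorem with remainder then yields $L(h) \le h^2/8$, which is exactly the lemma. I expect this computation of $L''$ and the bound $q(1-q)\le 1/4$ to be the main obstacle, though it is elementary calculus.

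Combining the pieces, $\Pr(S - \mu > \eps) \le \exp\bigl(-\lambda\eps + \lambda^2 \sum_i (b_i - a_i)^2 / 8\bigr)$. This exponent is minimized at $\lambda = 4\eps / \sum_i (b_i-a_i)^2$, which gives the bound $\exp\bigl(-2\eps^2/\sum_i (b_i-a_i)^2\bigr)$. For the lower tail, I would apply the same argument to the variables $-X_i \in [-b_i, -a_i]$, whose intervals have the same lengths $b_i - a_i$, and so obtain the same bound for $\Pr(\mu - S > \eps)$. A union bound over the two tails then produces the factor $2$ in the statement of \cref{thm:hoeffding}.
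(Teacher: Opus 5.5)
Your argument is correct and is essentially Hoeffding's own proof: the Chernoff--Cramér bound, the convexity chord estimate, and the Taylor bound on $L(h) = -hp + \log(1-p+pe^{h})$ via $L''(h) = q(1-q) \le 1/4$. The paper does not prove this theorem; it simply cites \cite{hoeffding1963probability}. The only unaddressed detail is the degenerate case $b_i = a_i$: there your $p$ is undefined, but $Y_i = 0$ almost surely, so the lemma holds trivially, and if every interval is degenerate the right-hand side should be read as $0$.
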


\textbf{Probabilistic spread model.}
In one-shot intervention settings like ours, we are interested in the ultimate lifetime spread outcome rather than granular, step-by-step temporal dynamics.
We also need to consider spread probabilistically: for sexually transmitted infections like HIV, there may be network edges that never result in transmission \cite{dosekun2010overview}.
To this end, we employ the Independent Cascade (IC) model \cite{kempe2003maximizing} to characterize the diffusion of infections across the network $\mathcal{G}$.
In the IC model, when a node $v$ becomes active, it has a single discrete opportunity to activate each successor $u$ with probability $w_{(v,u)}$.
The type and frequency of such shared behavior between two individuals carry a specific risk, making the IC model's ``probabilistic coin flip'' a natural representation in our scenario in which an edge activation can be viewed as a cumulative lifetime chance of infection. 
Related models like Susceptible-Infected (SI) \cite{nakamura2017efficient} can capture more granular transient states and temporal dynamics, but do not offer a natural measurement of the lifetime spread outcome, which is exactly what we are ultimately interested in studying in this work.
In order to adapt such models to one-shot problems, we would need to either set an arbitrary horizon after which we no longer care about spread, or consider an infinite horizon, under which SI degenerates to simple reachability from source nodes.
From a mathematical perspective, the IC model has a \emph{live-edge} interpretation: each edge is independently retained with probability $w_{(v,u)}$ to yield a random graph realization $\mathcal{H} \sim \mathcal{G}$.
Under the live-edge perspective, the diffusion process is equivalent to reachability in $\mathcal{H}$, so expectations are calculated over the distribution of random graph realizations.


\textbf{Influence maximization.}
Influence maximization seeks a seed set that maximizes expected spread under stochastic influence models such as the Independent Cascade (IC) model and \emph{Linear Threshold} (LT) models \cite{kempe2003maximizing}.
In contrast to the IC model, the LT model captures cumulative influence, where a node activates once the total influence from its neighbors exceeds a threshold \cite{granovetter1978threshold,schelling2006micromotives}, which is less appropriate in our setting.
A key feature of both models is that the expected spread is monotone submodular, enabling a $(1 - 1/e)$-approximation via greedy algorithms \cite{kempe2003maximizing}.
Other influence models and corresponding algorithms have since been proposed and studied beyond IC and LT, including Triggering \cite{khim2016computing}, Competitive Influence \cite{bharathi2007competitive}, and Time-Delayed diffusion \cite{chen2012time}, though IC clearly remains the best choice for our setting.

\begin{definition}[Influence Maximization (\texttt{IM}) under IC]
\label{def:im}
Given a directed probabilistic graph $\mathcal{G} = (\mathbf{V}, \mathbf{E}, w)$ and a budget $k \in \mathbb{N}$, the \texttt{IM} problem is to choose a seed set $\mathbf{S} \subseteq \mathbf{V}$ with $|\mathbf{S}| \leq k$ to maximize $\mathbb{E}_{\mathcal{H} \sim \mathcal{G}}[ f_{\mathcal{H}}^{\mathrm{\texttt{IM}}}(\mathbf{S}) ]$ where
$
f_{\mathcal{H}}^{\mathrm{\texttt{IM}}}(\mathbf{S}) = \left| \{ u \in \mathbf{V} : \text{$u$ is reachable from $\mathbf{S}$ in $\mathcal{H}$} \} \right|
$.
\end{definition}

\textbf{Influence minimization and node blocking.}
Our work is most closely related to influence minimization via node blocking \cite{wang2024efficient,xie2025influence}.
The works in this literature typically fix the seed active set $\mathbf Q$ and focus on blocking a set of inactive nodes from $\mathbf V \setminus \mathbf Q$ to minimize total active nodes after influence diffuses through the network.
Under the live-edge interpretation, \texttt{IMIN} is equivalent to removing nodes from each realization $\mathcal{H}$ and maximizing the reduction in nodes reachable from $\mathbf{Q}$.
A key challenge in \texttt{IMIN} is that the objective is \emph{not submodular}, which precludes standard greedy-based approximations and necessitates heuristic or relaxation-based approaches \cite{wang2024efficient, wang2025time}.
We will later show that \texttt{MINT}'s objective can be reduced to a special case of \texttt{IMIN} that turns out to be an easier problem to solve: our objective is submodular and thus we may leverage known results in submodular \emph{minimization}.

\begin{definition}[Influence Minimization (\texttt{IMIN}) under IC]
\label{def:IBM}
Let $\mathcal{G} = (\mathbf{V}, \mathbf{E}, w)$ be a directed probabilistic graph with an initial set of active nodes $\mathbf{Q} \subseteq \mathbf{V}$.
Given a budget $k \in \mathbb{N}$, the \texttt{IMIN} problem is to choose $\mathbf{S} \subseteq \mathbf{V} \setminus \mathbf Q$ with $|\mathbf{S}| \leq k$ to minimize $\mathbb{E}_{\mathcal{H} \sim \mathcal{G}}\!\left[f_{\mathcal{H}}^{\mathrm{\texttt{IMIN}}}(\mathbf{S})\right]$ where
$
f_{\mathcal{H}}^{\mathrm{\texttt{IMIN}}}(\mathbf{S})
= \left| \left\{ u \in \mathbf{V}: \text{$u$ is reachable from $\mathbf{Q}$ in $\mathcal{H}[\mathbf{V} \setminus \mathbf{S}]$} \right\} \right|
$.
\end{definition}


\textbf{Minimum-$k$-Union (\texttt{MkU}) and submodular minimization.}
The Minimum-$k$-Union (\texttt{MkU}) problem is a central challenge in combinatorial optimization that is closely related to the Maximum Coverage problem, both of which are known to be NP-hard.
Both problems deal with submodular objectives, but Maximum Coverage admits a $(1 - 1/e)$ approximation (which is also known to be tight) \cite{feige1998threshold}, while \texttt{MkU} only has polynomial time algorithms achieving $\mathcal{O}(n^{\frac{1}{4} - \eps})$ and $2 \sqrt{m}$ approximations (which are roughly tight under certain complexity conjectures) \cite{chlamtavc2017minimizing,chlamtac2018densest}.

We formally define \texttt{MkU} in \cref{def:MkU} and will utilize the $2 \sqrt{m}$ approximation of \cite{chlamtac2018densest} as a subroutine in our work.
There are also extensions of this approximation algorithm that efficiently approximate submodular minimization problems in general \cite{chen2024extensions}.

\begin{definition}[\texttt{MkU}]
\label{def:MkU}
Consider $m$ sets $\mathbf{S}_1, \ldots, \mathbf{S}_m \subseteq \mathbf{U}$ over a universe $\mathbf{U}$ of $n$ elements.
Given $k \in \mathbb{N}$, the goal is to find a collection of $k$ subsets with the minimum union size.
\end{definition}

\textbf{Strategic network interdiction.}
Several network-based interventions for preventing spread of disease have previously been studied.
One particularly relevant area of research is on targeted distribution of pre-exposure prophylaxis (PrEP), which is used to protect individuals who do \emph{not} have HIV but are at risk of infection \cite{roberts2022impact}.
One common approach is to use learned models to predict individual risk scores to determine which individuals are at greatest risk of infection \cite{marcus2019use, haas2021machine}, rather than considering potential downstream network impacts.
Beyond HIV PrEP, prior work has studied general vaccine distribution to minimize expected new infections, both in one-shot settings \cite{chen2010better, cheng2020outbreak} and in sequential settings \cite{dong2024integrating, ling2024cooperating}.
In contrast to these works that focus on vaccination (i.e., ``removing negative nodes'', similar to the \texttt{IMIN} problem), our work focuses on virally suppressing infected individuals (i.e., ``removing positive nodes'') which admits a different mathematical structure.
Prior work in optimizing interventions on people living with HIV has been more related to casefinding through network-based testing rather than achieving viral suppression. Several studies have shown that targeting networks of recently infected individuals for contact tracing and testing can lead to better prevention outcomes \cite{nikolopoulos2016network, morgan2019network}.

\section{Formalizing Real-World Interventions}
\label{sec:mint}
We formalize our real-world problem setting from \cref{intro} in \cref{def:mint}. Some proofs throughout this section are deferred to \cref{sec:appendix-proofs}.

\begin{definition}[Minimizing Infection via Node Treatment (\texttt{MINT})]
\label{def:mint}
Let $\mathcal{G} = (\mathbf{V}, \mathbf{E}, w)$ be a directed weighted disease transmission graph.
Vertices represent individuals, partitioned into positive (infected) nodes $\mathbf{P} \subseteq \mathbf{V}$ and negative (uninfected) nodes $\mathbf{N} = \mathbf{V} \setminus \mathbf{P}$.
A directed edge $(v,u) \in \mathbf{E}$ indicates that infection may propagate from $v$ to $u$, and becomes \emph{live} independently with probability $w_{(v,u)}$.
Given a budget $k \in \mathbb{N}$, the \texttt{MINT} problem is to select a subset $\mathbf{S} \subseteq \mathbf{P}$ with $|\mathbf{S}| \leq k$ to minimize the expected number of nodes in $\mathbf{N}$ reachable via live edges from the remaining untreated sources $\mathbf{P} \setminus \mathbf{S}$.
\end{definition}

The existence of an edge $(v, u)$ implies that the corresponding individuals engage in transmission risk behavior such as sexual acts or needle sharing. The edge weight $w_{(v,u)}$ indicates the specific probability that $v$ will transmit disease to $u$ if they are or become infected, which depends on factors like risk behavior frequency, condom use, and viral load \cite{boily2009heterosexual, pearson2007modeling}. As positive nodes cannot infect other nodes that are already positive, we may assume without loss of generality that there are no incoming edges to nodes in $\mathbf{P}$ in the \texttt{MINT} formulation.
This assumption ensures that any directed path originating from a positive node has only negative internal vertices.

In the probabilistic transmission model of \texttt{MINT}, there are $2^{|\mathbf{E}|}$ possible live-edge realizations.
For any realization $\mathcal{H} = (\mathbf{V}, \mathbf{E}_{\mathcal{H}})$ with $\mathbf{E}_{\mathcal{H}} \subseteq \mathbf{E}$, we have
\[
\Pr(\mathcal{H})
=
\prod_{e \in \mathbf{E}}
w_e^{\mathbbm{1}[e \in \mathbf{E}_{\mathcal{H}}]}
\cdot
(1 - w_e)^{\mathbbm{1}[e \not\in \mathbf{E}_{\mathcal{H}}]}
\]

Given a subset $\mathbf{S} \subseteq \mathbf{P}$ of positive individuals to treat, we remove these nodes from the graph and measure the remaining exposure.
More formally, for a fixed realization $\mathcal{H}$, we define
\begin{equation}
\label{eq:realized-f}
f_{\mathcal{H}}(\mathbf{S})
= \left| \left\{ u \in \mathbf{N} : \text{$u$ is reachable from $\mathbf{P} \setminus \mathbf{S}$ in the directed graph $\mathcal{H}[\mathbf{V} \setminus \mathbf{S}]$} \right\} \right|
\end{equation}
That is, $f_{\mathcal{H}}(\mathbf{S})$ counts the number of negative nodes that remain reachable from untreated positives \emph{after} removing treated nodes.
Taking expectation over realizations, we define
\begin{equation}
\label{eq:F}
F(\mathbf{S})
= \mathbb{E}_{\mathcal{H} \sim \mathcal{G}} \left[ f_{\mathcal{H}}(\mathbf{S}) \right]
\end{equation}
That is, the \texttt{MINT} objective is exactly minimizing $F(\mathbf{S})$ under the constraint of $\mathbf{S} \subseteq \mathbf{P}$ and $|\mathbf{S}| \leq k$.


With the objective represented in terms of $F$, \cref{prop:special-case-of-ibm} tells us that the \texttt{MINT} problem reduces to a special case of \texttt{IMIN}.
In fact, \texttt{MINT} admits a different mathematical structure that is easier to tackle: \cref{prop:F-is-submodular} shows that the objective is submodular, so \texttt{MINT} is a special case of submodular minimization and we may leverage on known results in submodular minimization.

\begin{proposition}
\label{prop:special-case-of-ibm}
\texttt{MINT} can be reduced to a special case of \texttt{IMIN} (\cref{def:IBM}).
\end{proposition}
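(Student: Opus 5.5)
The plan is to build, from any \texttt{MINT} instance $(\mathcal{G}=(\mathbf{V},\mathbf{E},w),\mathbf{P},k)$, an \texttt{IMIN} instance whose objective agrees with $F$ up to an additive constant that does not depend on $\mathbf{S}$, and such that feasible sets correspond one-to-one. The difficulty is that \texttt{IMIN} forbids blocking the seed set $\mathbf{Q}$, whereas \texttt{MINT} blocks positive nodes. We get around this with an auxiliary super-source whose edges into the positives are always live. Construct $\mathcal{G}'=(\mathbf{V}',\mathbf{E}',w')$ with $\mathbf{V}'=\mathbf{V}\cup\{q\}$ for a fresh node $q$. Let $\mathbf{E}'=\mathbf{E}\cup\{(q,p):p\in\mathbf{P}\}$, with $w'_{(q,p)}=1$ and $w'_e=w_e$ for $e\in\mathbf{E}$. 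Set $\mathbf{Q}=\{q\}$ and keep the budget $k$. We also restrict the blockable set to $\mathbf{P}\subseteq\mathbf{V}'\setminus\mathbf{Q}$; this restriction is what makes the result a \emph{special case} of \texttt{IMIN}.

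First, the live-edge distributions are coupled. Since the new edges have probability $1$, every realization $\mathcal{H}'\sim\mathcal{G}'$ is $\mathcal{H}$ together with all edges $(q,p)$, where $\mathcal{H}\sim\mathcal{G}$ has the same probability $\Pr(\mathcal{H})$ given by the product formula. Second, fix $\mathbf{S}\subseteq\mathbf{P}$ and $\mathcal{H}$, and characterize which nodes are reachable from $q$ in $\mathcal{H}'[\mathbf{V}'\setminus\mathbf{S}]$.
\begin{itemize}
\item Node $q$ itself is reachable.
\item The positives reachable are exactly $\mathbf{P}\setminus\mathbf{S}$, because positives have no incoming edges in $\mathbf{E}$ (the paper's w.l.o.g.\ assumption), so the only route to a positive is a direct edge from $q$.
\item A negative $u$ is reachable iff some path $q\to p\to\cdots\to u$ avoids $\mathbf{S}$, i.e.\ iff $u$ is reachable from $\mathbf{P}\setminus\mathbf{S}$ in $\mathcal{H}[\mathbf{V}\setminus\mathbf{S}]$.
\end{itemize}
Hence $f^{\texttt{IMIN}}_{\mathcal{H}'}(\mathbf{S}) = 1 + |\mathbf{P}| - |\mathbf{S}| + f_{\mathcal{H}}(\mathbf{S})$. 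Taking expectations gives $\mathbb{E}[f^{\texttt{IMIN}}_{\mathcal{H}'}(\mathbf{S})]=1+|\mathbf{P}|-|\mathbf{S}|+F(\mathbf{S})$.

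The main obstacle is the $-|\mathbf{S}|$ term: the two objectives differ by a term that depends on $|\mathbf{S}|$, not by a pure constant. We handle it by noting that $F$ is monotone nonincreasing, since removing more sources only shrinks the reachable negative set in every realization. So some optimal \texttt{MINT} solution has $|\mathbf{S}|=k$ (assuming $k\le|\mathbf{P}|$; otherwise take $\mathbf{S}=\mathbf{P}$ trivially). Likewise the \texttt{IMIN} objective is nonincreasing in $\mathbf{S}$, so it too is optimized at size exactly $k$. On the feasible family $\{\mathbf{S}\subseteq\mathbf{P}:|\mathbf{S}|=k\}$ the two objectives differ by the constant $1+|\mathbf{P}|-k$, so minimizers coincide.

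If one prefers to avoid the counting correction entirely, there is an alternative: have $\mathcal{G}'$ count only negatives. We can achieve this by replacing each $p$ with a weight-$0$-counted gadget, but the size-$k$ argument above seems cleaner, so I would use it. The reduction is clearly polynomial, and it restricts \texttt{IMIN} to a structured subclass: a single always-live super-source, blockable nodes limited to the out-neighbours of $q$, and no other edges into those nodes. This establishes the proposition.
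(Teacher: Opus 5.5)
Your proof is correct and uses the same construction as the paper: always-live auxiliary source(s) feeding the positives, with the blocker set restricted to $\mathbf{P}$. Your single super-source $q$ in place of one source per positive is an inessential variation. Your explicit offset $1+|\mathbf{P}|-|\mathbf{S}|$, together with the monotonicity argument that both problems are optimized at $|\mathbf{S}|=\min(k,|\mathbf{P}|)$, fills a step the paper's proof skips: its \texttt{IMIN} objective also counts the sources and the untreated positives, so it is not a pure shift of $F$ either.
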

\begin{proof}
Let $\mathcal{G} = (\mathbf{V}, \mathbf{E}, w)$ be a \texttt{MINT} instance with $\mathbf{P} \subseteq \mathbf{V}$.
Initialize an auxiliary \texttt{IMIN} graph $\mathcal{G}^\prime = (\mathbf{V}^\prime, \mathbf{E}^\prime, w^\prime)$ with $\mathbf{E}^\prime = \mathbf{E}$, $w^\prime = w$ and $\mathbf{V}^\prime = \mathbf{V} \cup \mathbf{Q}$, where $|\mathbf{Q}| = |\mathbf{P}|$.
In $\mathcal{G}^\prime$, nodes from $\mathbf{V}$ are inactive and the initial active nodes are $\mathbf{Q}$.
We associate each $p \in \mathbf{P}$ with a unique active node $q \in \mathbf{Q}$, then add a directed edge $(q,p)$ to $\mathbf{E}^\prime$ with weight $w^\prime_{(q, p)} = 1$.
In this construction, minimizing the \texttt{MINT} objective $F(\mathbf{S})$ on $\mathcal{G}$ is equivalent to minimizing the \texttt{IMIN} objective on $\mathcal{G}^\prime$, subject to the constraint of $\mathbf{S} \subseteq \mathbf{P} \subseteq (\mathbf{V}^\prime \setminus \mathbf{Q})$, rather than the more general case where $\mathbf{S} \subseteq (\mathbf{V}^\prime \setminus \mathbf{Q})$.
\end{proof}

\begin{restatable}{proposition}{Fissubmodular}
\label{prop:F-is-submodular}
$F(\mathbf{S})$ is non-negative, monotonically non-increasing, and submodular.    
\end{restatable}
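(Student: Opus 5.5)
Since $F(\mathbf{S}) = \mathbb{E}_{\mathcal{H}\sim\mathcal{G}}[f_{\mathcal{H}}(\mathbf{S})]$ is a nonnegative combination (with weights $\Pr(\mathcal{H})$) of the realized objectives $f_{\mathcal{H}}$, and non-negativity, monotonicity and submodularity are all preserved under such combinations, it suffices to prove the three properties for $f_{\mathcal{H}}$ for an arbitrary fixed realization $\mathcal{H}$. Non-negativity of $f_{\mathcal{H}}$ is immediate because it is a cardinality.

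The key step is to rewrite $f_{\mathcal{H}}$ as a union-size function. I will use the standing assumption that no edges enter $\mathbf{P}$, so every directed path starting at some $p \in \mathbf{P}$ has all its other vertices in $\mathbf{N}$. Removing treated nodes $\mathbf{S} \subseteq \mathbf{P}$ from $\mathcal{H}$ then cannot destroy any path that starts at an untreated $p \in \mathbf{P}\setminus\mathbf{S}$, since such a path never passes through another positive node. Hence, defining for each $p \in \mathbf{P}$ the set $\mathbf{R}_{\mathcal{H}}(p) \subseteq \mathbf{N}$ of negative nodes reachable from $p$ in $\mathcal{H}$, I will show
\[
f_{\mathcal{H}}(\mathbf{S}) = \Bigl| \bigcup_{p \in \mathbf{P}\setminus\mathbf{S}} \mathbf{R}_{\mathcal{H}}(p) \Bigr| .
\]
Verifying this identity carefully, namely that reachability in $\mathcal{H}[\mathbf{V}\setminus\mathbf{S}]$ from $\mathbf{P}\setminus\mathbf{S}$ coincides with reachability in $\mathcal{H}$ from $\mathbf{P}\setminus\mathbf{S}$, is the only real obstacle. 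It is also exactly where the argument departs from \texttt{IMIN}, whose objective is not submodular.

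Given the identity, the other two properties follow. For monotonicity, if $\mathbf{A}\subseteq\mathbf{B}\subseteq\mathbf{P}$ then $\mathbf{P}\setminus\mathbf{B}\subseteq\mathbf{P}\setminus\mathbf{A}$, so the union shrinks and $f_{\mathcal{H}}(\mathbf{B})\le f_{\mathcal{H}}(\mathbf{A})$.

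For submodularity, fix $\mathbf{A}\subseteq\mathbf{B}\subseteq\mathbf{P}$ and $x\in\mathbf{P}\setminus\mathbf{B}$. Let $\mathbf{U}_{\mathbf{T}}$ denote the union $\bigcup_{p \in \mathbf{P}\setminus\mathbf{T}} \mathbf{R}_{\mathcal{H}}(p)$. Adding $x$ to $\mathbf{S}$ removes from the union exactly the elements covered only by $x$, so
\[
f_{\mathcal{H}}(\mathbf{S}) - f_{\mathcal{H}}(\mathbf{S}\cup\{x\}) = \Bigl| \mathbf{R}_{\mathcal{H}}(x) \setminus \bigcup_{p\in\mathbf{P}\setminus(\mathbf{S}\cup\{x\})}\mathbf{R}_{\mathcal{H}}(p) \Bigr| .
\]
Since $\mathbf{P}\setminus(\mathbf{B}\cup\{x\}) \subseteq \mathbf{P}\setminus(\mathbf{A}\cup\{x\})$, the subtracted union is smaller for $\mathbf{B}$. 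Hence the marginal decrease at $\mathbf{B}$ is at least the marginal decrease at $\mathbf{A}$, i.e.
\[
f_{\mathcal{H}}(\mathbf{A}\cup\{x\}) - f_{\mathcal{H}}(\mathbf{A}) \le f_{\mathcal{H}}(\mathbf{B}\cup\{x\}) - f_{\mathcal{H}}(\mathbf{B}).
\]
This is the diminishing-returns characterization of submodularity. Equivalently, $g(\mathbf{T}) = |\bigcup_{p\in\mathbf{T}}\mathbf{R}_{\mathcal{H}}(p)|$ is a coverage function, hence submodular, and $\mathbf{S}\mapsto g(\mathbf{P}\setminus\mathbf{S})$ preserves submodularity under complementation.

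Finally, I will take expectations over $\mathcal{H}$ to transfer all three properties to $F$.
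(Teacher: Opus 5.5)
Your argument is sound, and it reaches the result by a somewhat different route from the paper's. The paper never writes $f_{\mathcal{H}}$ as a union. It splits $F$ further into per-node indicators $g_{\mathcal{H},u}(\mathbf{S})$, the indicator that $u \in \mathbf{N}$ is reachable from $\mathbf{P}\setminus\mathbf{S}$ in $\mathcal{H}[\mathbf{V}\setminus\mathbf{S}]$. It obtains monotonicity directly from subgraph and source-set containment. It then proves submodularity of each $\{0,1\}$-valued indicator by a case check. The only nontrivial case is $g_{\mathcal{H},u}(\mathbf{A}) = 1$, $g_{\mathcal{H},u}(\mathbf{A}\cup\{x\}) = 0$, and it is settled by noting that a path from $x$ to $u$ has only negative internal vertices, so it survives deleting $\mathbf{B}\setminus\mathbf{A}$.

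You instead push the no-incoming-edges property into the global identity $f_{\mathcal{H}}(\mathbf{S}) = |\bigcup_{p\in\mathbf{P}\setminus\mathbf{S}}\mathbf{R}_{\mathcal{H}}(p)|$. The paper proves this identity separately, in the \texttt{MINT}$\to$\texttt{MkU} direction of \cref{prop:MkU-equivalence}. Everything then follows from coverage functions being submodular and submodularity surviving complementation. Your version gives an explicit formula for the marginal decrease (the negatives covered only by $x$) and ties the proposition to the \texttt{MkU} structure that \texttt{CAST} exploits. The paper's version is local and self-contained, and it makes clear that monotonicity needs no structural input: the no-incoming-edges property is used only in the submodularity case. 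The indicator $g_{\mathcal{H},u}$ is just the element-wise piece of your coverage function, so the two proofs are aggregated and element-wise views of the same structure.

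One slip to fix: your displayed inequality points the wrong way. The sentence before it is correct: $f_{\mathcal{H}}(\mathbf{B}) - f_{\mathcal{H}}(\mathbf{B}\cup\{x\}) \geq f_{\mathcal{H}}(\mathbf{A}) - f_{\mathcal{H}}(\mathbf{A}\cup\{x\})$. Negating both sides gives $f_{\mathcal{H}}(\mathbf{A}\cup\{x\}) - f_{\mathcal{H}}(\mathbf{A}) \geq f_{\mathcal{H}}(\mathbf{B}\cup\{x\}) - f_{\mathcal{H}}(\mathbf{B})$, which is the diminishing-returns condition. The version with $\leq$ that you wrote is the supermodularity condition, and it is false here. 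Take two positives $x, y$, each with a single edge to the same negative $u$, with $\mathbf{A} = \emptyset$ and $\mathbf{B} = \{y\}$; your display then reads $0 \leq -1$. Your coverage-plus-complementation remark independently gives the right conclusion, so this is a sign error rather than a gap. As written, though, the display contradicts the sentence it claims to restate.
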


As both \texttt{MINT} and \texttt{IM} involve selecting a subset of individuals to optimize stochastic influence on the network via independent cascade, it is tempting to view \texttt{MINT} as a variant of influence maximization.
However, the two problems are fundamentally different in their combinatorial structure.
\texttt{IM} is a coverage problem: it seeks a set $\mathbf{S}$ that maximizes the size of a union of reachable sets.
In contrast, \texttt{MINT} is an \emph{interdiction} problem: it seeks a set $\mathbf{S}$ whose \emph{removal} minimizes the size of the union of reachable sets originating from the \emph{remaining} unsuppressed nodes $\mathbf{P} \setminus \mathbf{S}$.
These objectives are not equivalent under any complement-based transformation of the seed set, primarily because \texttt{IM} focuses on what is gained through selection, while \texttt{MINT} focuses on what remains after suppression. 
\cref{prop:distinct} formalizes this distinction via a concrete counterexample. Nevertheless, we include this complement-based \texttt{IM} approach as a baseline in our experiments.

\begin{proposition}
\label{prop:distinct}
The \texttt{MINT} problem is distinct from \texttt{IM}. Specifically, let $\mathbf{S}^*_{\texttt{MINT}}$ be an optimal solution for \texttt{MINT} and $\mathbf{S}^*_{\texttt{IM}}$ be an optimal solution for \texttt{IM} restricted to the same graph $\mathcal{G}$, candidate set $\mathbf{P}$, and budget $k$.
There exist graph topologies $\mathcal{G}$ such that $\mathbf{S}^*_{\texttt{IM}}$ is strictly suboptimal for the \texttt{MINT} objective.
\end{proposition}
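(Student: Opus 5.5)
Since the statement is existential, the plan is to build one small, deterministic instance (all edge weights equal to $1$, so there is a single realization $\mathcal{H}=\mathcal{G}$ and expectations disappear) with $k=1$. On it we compute the \texttt{IM} optimum restricted to seeds in $\mathbf{P}$ and the \texttt{MINT} optimum, and check that they differ. The design idea is that \texttt{IM} favors the positive node with the largest reachable set, whereas \texttt{MINT} favors the positive node whose reachable set is \emph{least covered} by the other positives. So we want one positive with a large reachable set that is entirely covered by another positive, and one positive with a smaller reachable set that is private to it.

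Concretely, let $\mathbf{P}=\{a,b,c\}$ and $\mathbf{N}=\{x_1,x_2,x_3,y_1,y_2\}$. Add edges $a\to x_i$ and $b\to x_i$ for $i=1,2,3$, and $c\to y_j$ for $j=1,2$, all with weight $1$. There are no edges into $\mathbf{P}$, consistent with the modeling assumption. For \texttt{IM} with $k=1$ and seeds from $\mathbf{P}$, we have $f^{\texttt{IM}}(\{a\})=f^{\texttt{IM}}(\{b\})=4$ (counting the seed) and $f^{\texttt{IM}}(\{c\})=3$, so every optimal \texttt{IM} solution is $\{a\}$ or $\{b\}$. The comparison is unchanged if only negatives are counted (scores $3,3,2$). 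For \texttt{MINT}, \eqref{eq:realized-f} gives $F(\{a\})=F(\{b\})=5$, because all $x_i$ remain reachable from the other of $a,b$ and the $y_j$ remain reachable from $c$. By contrast, $F(\{c\})=3$. Hence $\mathbf{S}^*_{\texttt{MINT}}=\{c\}$, and every \texttt{IM}-optimal set has \texttt{MINT} value $5>3$, so it is strictly suboptimal.

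The remaining step is to address the phrase ``under any complement-based transformation'' in the surrounding text. Although the proposition as stated only compares $\mathbf{S}^*_{\texttt{IM}}$ directly, I would also check the complement reading: choose $\mathbf{P}\setminus\mathbf{S}$ to minimize \texttt{IM} spread, i.e.\ maximize the \texttt{IM} spread of $\mathbf{S}$'s complement negatively. In this instance that reading coincides with \texttt{MINT} when $|\mathbf{S}|=k$, so the distinction rests on the direct use of $\mathbf{S}^*_{\texttt{IM}}$ as the treated set, which is exactly what the proposition claims.

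The main (minor) obstacle is ambiguity in how $f^{\texttt{IM}}$ counts seeds, and whether ties in \texttt{IM} could include the \texttt{MINT} optimum. The instance was chosen so that $c$ is strictly worse for \texttt{IM} under either counting convention, and so that both \texttt{IM} optima are strictly worse for \texttt{MINT}. That removes any dependence on tie-breaking.
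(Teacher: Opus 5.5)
Your proof is correct and uses the same strategy as the paper: an explicit unit-weight (single-realization) counterexample on which every \texttt{IM}-optimal seed set in $\mathbf{P}$ has strictly larger \texttt{MINT} value than the \texttt{MINT} optimum. The difference is only the instance. The paper uses the four-positive, $k=2$ graph of its figure, where any \texttt{IM} optimum must contain $a$ to reach $a$'s isolated neighbor, forcing $F\ge 2$ versus $F(\{c,d\})=1$. Your three-positive, $k=1$ instance (a redundant large reach set versus a small private one) is, if anything, slightly cleaner: it has no edges into $\mathbf{P}$, and you explicitly rule out tie-breaking and seed-counting issues.
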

\begin{proof}
Consider the graph in \cref{fig:different-from-influence-maximization}, and suppose $w_e = 1$ for all edges so that there is only one live-edge realization.
Suppose $\mathbf{P} = \{a,b,c,d\}$.
With a budget of $k = 2$, we see that $\mathbf{S}^\star_{\texttt{MINT}} = \{c,d\}$ and $a \in \mathbf{S}^\star_{\texttt{IM}}$, resulting in $F(\mathbf{S}^\star_{\texttt{MINT}}) = 1$ and $F(\mathbf{S}^\star_{\texttt{IM}}) \geq 2$.
\end{proof}

\begin{figure}[htb]
\centering
\input{different-from-influence-maximization}
\caption{Suppose $\mathbf{P} = \{a,b,c,d\}$. With a budget of $k = 2$, we should treat $\mathbf{S} = \{c,d\}$ resulting in $F(\mathbf{S}) = 1$. Meanwhile, an influence maximization approach will pick $\mathbf{S'}$ with node $a \in \mathbf{S'}$ to ensure that all nodes are influenced. However, doing so necessarily causes $F(\mathbf{S}') \geq 2$.}
\label{fig:different-from-influence-maximization}
\end{figure}

\section{Cascade-Aware Suppression of Transmission (\texttt{CAST})}
\label{sec:fixed-realization}

We now introduce and analyze our approach to \texttt{MINT}, which we call Cascade-Aware Suppression of Transmission (\texttt{CAST}); see \cref{alg:known}.
We begin by showing a useful relationship between \texttt{MINT} and the \texttt{MkU} problem that we will exploit in the design of \texttt{CAST}.


\begin{algorithm}[htb]
\caption{Cascade-Aware Suppression of Transmission (\texttt{CAST})}
\label{alg:known}
\begin{algorithmic}[1]
\Require Infection network $\mathcal{G} = (\mathbf{V} = \mathbf{P} \cup \mathbf{N}, \mathbf{E}, w)$, budget $k \in \mathbb{N}$, error tolerance  $\epsilon$, probability $\delta$
\Ensure $\mathbf{S} \subseteq \mathbf{P}$ \Comment{$|\mathbf{S}| \leq k$}
    \State Define $m = \left\lceil \frac{|\mathbf{N}|^2}{2\eps^2} \ln \frac{2^{|\mathbf{P}| + 1}}{\delta} \right\rceil$ \Comment{See \cref{lem:algo-analysis}}
    \State Sample $m$ realizations $\mathcal{H}_1, \ldots, \mathcal{H}_m \sim \mathcal{G}$ \Comment{Possibly repeated}
    \State For each $v \in \mathbf{P}$, define reachability sets
    $
    \mathbf{R}_v = \left\{ (i, u) \in [m] \times 
    \mathbf{N}: \text{$u$ is reachable from $v$ in $\mathcal{H}_i$} \right\}
    $
    \State $\mathbf{S}^\prime \gets$ Run the \texttt{MkU} algorithm of \cite{chlamtac2018densest} on the sets $\{\mathbf{R}_v\}_{v \in \mathbf{P}}$ with parameter $|\mathbf{P}| - k \in \mathbb{N}$
    \State \Return $\mathbf{S} = \mathbf P \setminus \mathbf{S}^\prime$
\end{algorithmic}
\end{algorithm}



\begin{proposition}
\label{prop:MkU-equivalence}
For any fixed realization $\mathcal{H} \sim \mathcal{G}$, the problem of finding a subset $\mathbf{S} \subseteq \mathbf{P}$ with $|\mathbf{S}| \leq k$ that minimizes $f_{\mathcal{H}}(\mathbf{S})$ is equivalent to the \texttt{MkU} problem with parameter $k' = |\mathbf{P}| - k$.
\end{proposition}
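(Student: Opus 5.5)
The plan is to rewrite $f_{\mathcal{H}}(\mathbf{S})$ as the size of a union of per-source reachability sets, and then reparametrize the choice of $\mathbf{S}$ by its complement $\mathbf{T} = \mathbf{P} \setminus \mathbf{S}$.

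For each $v \in \mathbf{P}$, define $\mathbf{R}_v^{\mathcal{H}} = \{u \in \mathbf{N} : u \text{ is reachable from } v \text{ in } \mathcal{H}\}$. The first and key step is the identity
\[
f_{\mathcal{H}}(\mathbf{S}) = \Bigl| \bigcup_{v \in \mathbf{P} \setminus \mathbf{S}} \mathbf{R}_v^{\mathcal{H}} \Bigr|.
\]
It is the only place where the structure of \texttt{MINT} is used, and I expect it to be the main obstacle. The difficulty is that $f_{\mathcal{H}}$ is defined via reachability in $\mathcal{H}[\mathbf{V} \setminus \mathbf{S}]$, not in $\mathcal{H}$, so we must check that deleting $\mathbf{S}$ does not cut any path from an untreated source.

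I would use the convention stated after \cref{def:mint}: there are no incoming edges to nodes of $\mathbf{P}$. Take any directed path in $\mathcal{H}$ from $v \in \mathbf{P}\setminus\mathbf{S}$ to $u \in \mathbf{N}$. Every vertex after the first has an incoming path edge, so it lies in $\mathbf{N}$ and not in $\mathbf{S} \subseteq \mathbf{P}$. Hence the whole path survives in $\mathcal{H}[\mathbf{V}\setminus\mathbf{S}]$, which gives "$\supseteq$". Conversely, a path in the induced subgraph is also a path in $\mathcal{H}$, which gives "$\subseteq$".

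Next, I would set $\mathbf{T} = \mathbf{P}\setminus\mathbf{S}$, which is a bijection between subsets of $\mathbf{P}$. The constraint $|\mathbf{S}| \le k$ becomes $|\mathbf{T}| \ge |\mathbf{P}| - k = k'$, and the objective becomes $|\bigcup_{v\in\mathbf{T}} \mathbf{R}_v^{\mathcal{H}}|$. This union is monotone non-decreasing in $\mathbf{T}$, so any feasible $\mathbf{T}$ can be shrunk to a subset of size exactly $k'$ without increasing the objective. (Equivalently, by \cref{prop:F-is-submodular} applied to a deterministic $\mathcal{H}$, an optimal $\mathbf{S}$ may be taken with $|\mathbf{S}| = k$, assuming $k \le |\mathbf{P}|$.)

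The problem is therefore exactly \texttt{MkU} (\cref{def:MkU}) with universe $\mathbf{U} = \mathbf{N}$, the $m = |\mathbf{P}|$ sets $\{\mathbf{R}_v^{\mathcal{H}}\}_{v\in\mathbf{P}}$, and parameter $k'$. One detail to note is that identical sets for different $v$ are treated as distinct indexed sets.

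Finally, I would state the equivalence in both directions, with equal objective values:
\begin{itemize}
\item any \texttt{MkU} solution $\mathbf{T}$ yields the feasible $\mathbf{S} = \mathbf{P}\setminus\mathbf{T}$;
\item any feasible $\mathbf{S}$ yields an \texttt{MkU} solution (after trimming) of no larger cost.
\end{itemize}
Consequently, optima coincide and approximation ratios transfer. The set-construction step is polynomial, since each $\mathbf{R}_v^{\mathcal{H}}$ is computed by a BFS from $v$ in $\mathcal{H}$.
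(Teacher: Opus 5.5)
You establish only one of the two directions the word ``equivalent'' requires. What you prove is correct, and in places more careful than the paper. The paper only asserts that reachability from $\mathbf{P}\setminus\mathbf{S}$ in $\mathcal{H}[\mathbf{V}\setminus\mathbf{S}]$ coincides with reachability in $\mathcal{H}$ ``because nodes in $\mathbf{P}$ have no incoming edges.'' You verify this with the path argument. You also handle the mismatch between the constraint $|\mathbf{S}|\le k$ and \texttt{MkU}'s requirement of exactly $k'$ sets by trimming, which the paper glosses over.

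The gap is in what you call ``both directions.'' Your two bullets are just the two halves of the solution correspondence between one given \texttt{MINT} instance and the \texttt{MkU} instance you built from it. That shows fixed-realization \texttt{MINT} is a special case of \texttt{MkU}, which suffices for running the \texttt{MkU} algorithm inside \texttt{CAST}. It does not show the converse: that every \texttt{MkU} instance can be encoded as a fixed-realization \texttt{MINT} instance. The paper's proof is an explicit bidirectional reduction. It uses the converse immediately afterwards to conclude that \texttt{MINT} is NP-hard and inherits \texttt{MkU}'s inapproximability. Correspondingly, your remark that ``approximation ratios transfer'' is justified only in the algorithmic direction. If you read the proposition purely as the instance-level claim, your argument suffices, but that is not how the paper proves or uses it.

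The fix is short and reuses your own lemmas.
\begin{enumerate}
\item Take an arbitrary \texttt{MkU} instance with sets $X_1,\dots,X_m$ over $\mathbf{U}=\{u_1,\dots,u_n\}$ and parameter $k'$.
\item Build a bipartite graph with positives $p_1,\dots,p_m$ and negatives $n_1,\dots,n_n$. Add an edge $(p_i,n_j)$ of weight $1$ iff $u_j\in X_i$, so the instance is deterministic. Set the budget to $k=m-k'$.
\item Positives trivially have no incoming edges, and the reachability set of $p_i$ is $\{n_j : u_j\in X_i\}$.
\item Your union identity then gives $f_{\mathcal{H}}(\mathbf{S}) = |\bigcup_{i:\,p_i\notin\mathbf{S}} X_i|$, and your trimming argument shows the optimal values coincide.
\end{enumerate}
With this added, your proof covers the full statement.
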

\begin{proof}
Recall the problem definitions of \texttt{MkU} and \texttt{MINT} in \cref{def:MkU} and \cref{def:mint} respectively.
We demonstrate this equivalence by providing a bidirectional reduction between the two problems.

\textbf{(\texttt{MINT} $\to$ \texttt{MkU}):}
Given a realization $\mathcal{H}$, let the \texttt{MkU} universe be $\mathbf{U} = \mathbf{N}$.
For each positive node $v \in \mathbf{P}$, define a set $\mathbf{R}_v \subseteq \mathbf{U}$ containing all negative nodes reachable from $v$ in $\mathcal{H}$.
Let $\mathbf{S} \subseteq \mathbf{P}$ be the set of treated nodes with $|\mathbf{S}| \leq k$.
The set of untreated nodes is $\mathbf{P}' = \mathbf{P} \setminus \mathbf{S}$, where $|\mathbf{P}'| \geq |\mathbf{P}| - k$.
Because nodes in $\mathbf{P}$ have no incoming edges, a node $u \in \mathbf{N}$ is reachable from the set $\mathbf{P} \setminus \mathbf{S}$ if and only if it is reachable from at least one node $v \in \mathbf{P} \setminus \mathbf{S}$ in the original realization $\mathcal{H}$.
Thus, $f_{\mathcal{H}}(\mathbf{S}) = |\bigcup_{v \in \mathbf{P} \setminus \mathbf{S}} \mathbf{R}_v|$.
Minimizing $f_{\mathcal{H}}(\mathbf{S})$ is therefore equivalent to selecting $k' = |\mathbf{P}| - k$ sets from $\{\mathbf{R}_v\}_{v \in \mathbf{P}}$ to minimize the size of their union.

\textbf{(\texttt{MkU} $\to$ \texttt{MINT}):}
Consider an \texttt{MkU} instance with sets $\mathcal{X} = \{X_1, \ldots, X_m\}$ over universe $\mathbf{U} = \{u_1, \ldots, u_n\}$ and a budget $k'$.
We construct a directed bipartite graph $\mathcal{H} = (\mathbf{V}, \mathbf{E}_{\mathcal{H}})$ where $\mathbf{P} = \{p_1, \dots, p_m\}$ and $\mathbf{N} = \{n_1, \dots, n_n\}$.
We add a directed edge $(p_i, n_j)$ if and only if $u_j \in X_i$.
In this construction, $p_i$ has no incoming edges.
Selecting $k'$ sets to minimize their union in the \texttt{MkU} instance is identical to selecting $k'$ untreated nodes in \texttt{MINT} (where the treatment budget is $k = m - k'$) to minimize the number of reachable negative nodes.
\end{proof}

This equivalence implies that \texttt{MINT} is NP-hard and inherits the inapproximability results of \texttt{MkU}.
Crucially, it justifies our use of the $2 \sqrt{|\mathbf{P}|}$ approximation algorithm for the set-union objective in \cref{alg:known}.
Furthermore, the proof highlights the importance of the ``no incoming edges for $\mathbf{P}$'' property: without it, the reachability of a node $u$ could depend on paths through other treated nodes, breaking the set-union structure.

Leveraging the structural equivalence from \cref{prop:MkU-equivalence}, we present our main theoretical result for this section, which is that \texttt{CAST} (\cref{alg:known}) is a polynomial time approximation algorithm.

\begin{restatable}{theorem}{approximationratio}
\label{thm:approximation_ratio}
For any fixed $\eps, \delta \in (0,1)$, let $\mathbf{S}^\star$ be the true optimizer for $F$ and $\hat{\mathbf{S}}$ be the set returned by \texttt{CAST} (\cref{alg:known}).
Then, we have $\Pr \left[ F(\hat{\mathbf{S}}) \leq 2 \sqrt{|\mathbf{P}|} \cdot F(\mathbf{S}^\star) + \eps \cdot (1 + 2 \sqrt{|\mathbf{P}|}) \right] \geq 1 - 2 \delta$.
Furthermore, \texttt{CAST} runs in polynomial time with respect to $|\mathbf{V}|$, $|\mathbf{E}|$, $\eps^{-1}$, and $\delta^{-1}$.
\end{restatable}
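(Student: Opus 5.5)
The plan is to compare $F$ with its empirical counterpart built from the $m$ samples, apply \cref{thm:hoeffding} uniformly over all candidate sets, and then transfer the \texttt{MkU} guarantee from the empirical objective to $F$. For $\mathbf{S} \subseteq \mathbf{P}$ define
\[
\hat F(\mathbf{S}) = \frac{1}{m}\sum_{i=1}^m f_{\mathcal{H}_i}(\mathbf{S}).
\]
By the argument in \cref{prop:MkU-equivalence}, which relies on the absence of incoming edges to $\mathbf{P}$, we have $f_{\mathcal{H}_i}(\mathbf{S}) = |\bigcup_{v\in\mathbf{P}\setminus\mathbf{S}} \{u : (i,u)\in\mathbf{R}_v\}|$. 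Since the index $i$ keeps the sample copies of the universe disjoint, this gives
\[
m\,\hat F(\mathbf{S}) = \Bigl|\bigcup_{v\in \mathbf{P}\setminus\mathbf{S}} \mathbf{R}_v\Bigr|.
\]
So the sets $\{\mathbf{R}_v\}$ form an \texttt{MkU} instance on the universe $[m]\times\mathbf{N}$ with $|\mathbf{P}|$ sets. Choosing $|\mathbf{P}|-k$ of them to keep, namely $\mathbf{S}'$, is the same as choosing $\mathbf{S}=\mathbf{P}\setminus\mathbf{S}'$ with $|\mathbf{S}|=k$. Because $F$ is non-increasing by \cref{prop:F-is-submodular}, we may assume $|\mathbf{S}^\star|=k$ without loss of generality. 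The guarantee of \cite{chlamtac2018densest} then gives $\hat F(\hat{\mathbf{S}}) \le 2\sqrt{|\mathbf{P}|}\,\min_{|\mathbf{S}|=k}\hat F(\mathbf{S}) \le 2\sqrt{|\mathbf{P}|}\,\hat F(\mathbf{S}^\star)$; the factor $1/m$ cancels on both sides.

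\textbf{Concentration.} For each fixed $\mathbf{S}$, the quantities $X_i = f_{\mathcal{H}_i}(\mathbf{S})/m$ are i.i.d., lie in $[0,|\mathbf{N}|/m]$, and have mean $F(\mathbf{S})/m$. \cref{thm:hoeffding} gives
\[
\Pr\bigl(|\hat F(\mathbf{S}) - F(\mathbf{S})| > \eps\bigr) \le 2\exp\bigl(-2m\eps^2/|\mathbf{N}|^2\bigr).
\]
With the stated choice of $m$, this is at most $\delta/2^{|\mathbf{P}|}$. A union bound over all $2^{|\mathbf{P}|}$ subsets of $\mathbf{P}$ shows that, with probability at least $1-\delta$, $|\hat F(\mathbf{S})-F(\mathbf{S})|\le\eps$ holds for every $\mathbf{S}\subseteq\mathbf{P}$ simultaneously. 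This covers the data-dependent output $\hat{\mathbf{S}}$, and that is the real obstacle here: Hoeffding cannot be applied to $\hat{\mathbf{S}}$ directly because it depends on the samples, and the uniform bound is exactly what the $2^{|\mathbf{P}|+1}$ inside the logarithm pays for.

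\textbf{Chaining the inequalities.} On this good event,
\[
F(\hat{\mathbf{S}}) \le \hat F(\hat{\mathbf{S}})+\eps \le 2\sqrt{|\mathbf{P}|}\,\hat F(\mathbf{S}^\star)+\eps \le 2\sqrt{|\mathbf{P}|}\,(F(\mathbf{S}^\star)+\eps)+\eps,
\]
which is the claimed bound. The stated probability $1-2\delta$ leaves slack. I would account for it either by running separate $\delta$-budgets for $\hat{\mathbf{S}}$ and $\mathbf{S}^\star$, or by allowing up to $\delta$ failure probability for a randomized \texttt{MkU} subroutine. If that subroutine is deterministic, the bound in fact holds with probability $1-\delta$, which is stronger.

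\textbf{Running time.} Here $m = O\bigl(|\mathbf{N}|^2\eps^{-2}(|\mathbf{P}|+\ln\delta^{-1})\bigr)$, which is polynomial in $|\mathbf{V}|$, $\eps^{-1}$ and $\delta^{-1}$ (in fact logarithmic in $\delta^{-1}$). Each sample takes $O(|\mathbf{E}|)$ time. Computing each $\mathbf{R}_v$ takes one BFS per sample, so $O(m|\mathbf{P}|(|\mathbf{V}|+|\mathbf{E}|))$ in total. The \texttt{MkU} algorithm of \cite{chlamtac2018densest} runs in time polynomial in the number of sets, $|\mathbf{P}|$, and the universe size, $m|\mathbf{N}|$. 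Together these give polynomial time overall.
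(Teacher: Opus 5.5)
Your proposal is correct and follows the paper's own proof: the identity $m\hat F(\mathbf{S}) = \bigl|\bigcup_{v\in\mathbf{P}\setminus\mathbf{S}}\mathbf{R}_v\bigr|$ reduces the empirical problem to \texttt{MkU}, Hoeffding with a union bound over all $2^{|\mathbf{P}|}$ subsets of $\mathbf{P}$ controls $|\hat F - F|$, and the same four-step chain of inequalities gives the bound. The main difference is that you use the simultaneous good event over all $\mathbf{S}\subseteq\mathbf{P}$, which legitimately covers the sample-dependent output $\hat{\mathbf{S}}$; the paper instead invokes its per-set concentration lemma separately for $\hat{\mathbf{S}}$ and for $\mathbf{S}^\star$ and pays $2\delta$. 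Your route therefore gives success probability $1-\delta$, which implies the stated $1-2\delta$, and your monotonicity remark also cleanly settles the $|\mathbf{S}|=k$ versus $|\mathbf{S}|\le k$ point that the paper glosses over.
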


To establish \cref{thm:approximation_ratio}, we prove the following two lemmas.
\cref{lem:set_construction} examines the structure of each set $\mathbf{R}_v$ in \texttt{CAST} (\cref{alg:known}) and reveals how these set constructions bridge the gap between the single realization case from \cref{prop:MkU-equivalence} to the general empirical objective over many sampled realizations.
Meanwhile, \cref{lem:algo-analysis} demonstrates the importance of the choice of $m$, relating the empirical sample mean $\hat{F}_m$ to the true expectation $F$ by applying Hoeffding's inequality (\cref{thm:hoeffding}) to the random variables $X_i = f_{\mathcal{H}_i}(\mathbf{S})$.

\begin{restatable}{lemma}{setconstruction}
\label{lem:set_construction}
For any $m$ realizations $\mathcal{H}_1, \ldots, \mathcal{H}_m \sim \mathcal{G}$ and $\mathbf{S} \subseteq \mathbf{P}$, define $\hat{F}_m(\mathbf{S}) = \frac{1}{m} \sum_{i=1}^m f_{\mathcal{H}_i}(\mathbf{S})$ as the empirical version of $F(\mathbf{S})$ defined on the $m$ realizations.
Then, solving \texttt{MkU} on $\{\mathbf{R}_v\}_{v \in \mathbf{P}}$ and $k' = |\mathbf{P}| - k \in \mathbb{N}$ in \texttt{CAST} (\cref{alg:known}) produces $\mathbf{S}^\prime$ such that $\mathbf{S}  = \mathbf{P} \setminus \mathbf{S}^\prime$ optimally minimizes $\hat{F}_m(\mathbf{S})$, with $|\mathbf{S}| \leq k$.
\end{restatable}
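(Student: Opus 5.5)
The plan is to extend the single-realization reduction of \cref{prop:MkU-equivalence} to $m$ realizations by working with the disjoint union of the realizations, indexed by the pair $(i,u)$. The key identity to establish is that for every $\mathbf{S} \subseteq \mathbf{P}$,
\[
m \cdot \hat{F}_m(\mathbf{S}) = \sum_{i=1}^m f_{\mathcal{H}_i}(\mathbf{S}) = \Bigl| \bigcup_{v \in \mathbf{P} \setminus \mathbf{S}} \mathbf{R}_v \Bigr|,
\]
with $\mathbf{R}_v$ as defined in \cref{alg:known}. Once this holds, minimizing $\hat{F}_m$ over feasible $\mathbf{S}$ is the same as minimizing the size of a union of the sets $\mathbf{R}_v$ for the kept nodes $v \in \mathbf{P} \setminus \mathbf{S}$.

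First I will fix $i$ and repeat the argument of \cref{prop:MkU-equivalence}. Nodes in $\mathbf{P}$ have no incoming edges, so every path from $v \in \mathbf{P}\setminus\mathbf{S}$ has only negative internal vertices. Deleting $\mathbf{S}$ therefore removes no such path, and hence
\[
f_{\mathcal{H}_i}(\mathbf{S}) = \Bigl|\bigcup_{v \in \mathbf{P}\setminus\mathbf{S}} \mathbf{R}_v^{(i)}\Bigr|,
\qquad
\mathbf{R}_v^{(i)} = \{u \in \mathbf{N} : u \text{ reachable from } v \text{ in } \mathcal{H}_i\}.
\]
Next, observe that $\mathbf{R}_v = \bigsqcup_i \{i\} \times \mathbf{R}_v^{(i)}$, where the slices for different $i$ are disjoint because of the first coordinate. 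Unions then commute slice by slice, so the size of $\bigcup_{v} \mathbf{R}_v$ is the sum over $i$ of the slice sizes. This gives the identity. Repeated samples cause no problem, since each copy carries its own index $i$.

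Then I will handle the cardinality constraint. Since $\hat{F}_m$ is monotone non-increasing (the union shrinks when fewer sets are kept), some optimum has $|\mathbf{S}| = k$ exactly, that is, $|\mathbf{P}\setminus\mathbf{S}| = k'$. So minimizing $\hat{F}_m$ over $|\mathbf{S}| \le k$ is equivalent to choosing $k'$ of the sets $\{\mathbf{R}_v\}$ with minimum union, which is \texttt{MkU}. Complementation is a bijection between the solutions, so an optimal \texttt{MkU} solution $\mathbf{S}'$ gives an optimal $\mathbf{S} = \mathbf{P}\setminus\mathbf{S}'$ with $|\mathbf{S}| = k$.

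The main subtlety is interpretive rather than technical. The lemma says \emph{solving} \texttt{MkU} yields an optimal $\mathbf{S}$, so I will state the conclusion for an exact \texttt{MkU} solution. I will also note that an $\alpha$-approximate \texttt{MkU} solution yields an $\alpha$-approximate minimizer of $\hat{F}_m$, because the identity preserves objective values up to the factor $m$. This is the form used later with the $2\sqrt{|\mathbf{P}|}$ algorithm of \cite{chlamtac2018densest}.
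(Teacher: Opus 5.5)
Your proposal is correct and follows essentially the same route as the paper: both prove the identity $\bigl|\bigcup_{v \in \mathbf{P}\setminus\mathbf{S}} \mathbf{R}_v\bigr| = m\cdot\hat{F}_m(\mathbf{S})$ over the universe $[m]\times\mathbf{N}$, then read off the \texttt{MkU} instance with $k' = |\mathbf{P}| - k$. You also spell out two steps the paper's proof skips: using the no-incoming-edges property to equate reachability in $\mathcal{H}_i[\mathbf{V}\setminus\mathbf{S}]$ with reachability in $\mathcal{H}_i$, and using monotonicity to reconcile the constraint $|\mathbf{S}|\le k$ with choosing exactly $k'$ sets.
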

\begin{proof}
By definition, $\hat{F}_m(\mathbf{S}) = \frac{1}{m} \sum_{i=1}^m | \{ u \in \mathbf{N} : \exists v \in \mathbf{P} \setminus \mathbf{S} \text{ s.t. } v \to u \text{ in } \mathcal{H}_i \} |$.
Let the universe be $\mathbf{U} = \{1, \dots, m\} \times \mathbf{N}$.
For each $v \in \mathbf{P}$, let $\mathbf{R}_v$ be the subset of $\mathbf{U}$ containing pairs $(i, u)$ such that $u$ is reachable from $v$ in realization $\mathcal{H}_i$, as in \cref{alg:known}.
Then,
\[
\left| \bigcup_{v \in \mathbf{P} \setminus \mathbf{S}} \mathbf{R}_v \right|
= \sum_{i=1}^m | \{ u \in \mathbf{N} : \exists v \in \mathbf{P} \setminus \mathbf{S} \text{ s.t. } v \to u \text{ in } \mathcal{H}_i \} | = m \cdot \hat{F}_m(\mathbf{S})
\]
So, minimizing the union of $|\mathbf{P}| - k$ sets in the universe $\mathbf{U}$ is precisely the \texttt{MkU} problem, and the claim follows as minimizing $m \cdot \hat{F}_m(\mathbf{S})$ is equivalent to minimizing $\hat{F}_m(\mathbf{S})$.
\end{proof}

\begin{restatable}{lemma}{algoanalysis}
\label{lem:algo-analysis}
Fix any $\eps > 0$ and $\delta \in (0,1)$.
With $m = \left\lceil \frac{|\mathbf{N}|^2}{2\eps^2} \ln \frac{2^{|\mathbf{P}| + 1}}{\delta} \right\rceil$, for any $\mathbf{S} \subseteq \mathbf{P}$, we have
\[
\Pr \left[ \left| F(\mathbf{S}) - \hat{F}_m(\mathbf{S}) \right| > \eps \right] \leq \delta
\]
\end{restatable}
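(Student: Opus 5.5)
We apply Hoeffding's inequality (\cref{thm:hoeffding}) directly to the random variables $X_i = f_{\mathcal{H}_i}(\mathbf{S})$ for a fixed $\mathbf{S} \subseteq \mathbf{P}$. The realizations $\mathcal{H}_1, \ldots, \mathcal{H}_m$ are sampled independently from $\mathcal{G}$. Since $\mathbf{S}$ is fixed in advance (it does not depend on the samples), the variables $X_1, \ldots, X_m$ are independent and identically distributed. Each satisfies $\mathbb{E}[X_i] = F(\mathbf{S})$ by the definition in \cref{eq:F}. Moreover, each $f_{\mathcal{H}_i}(\mathbf{S})$ counts a subset of $\mathbf{N}$, so we may take $a_i = 0$ and $b_i = |\mathbf{N}|$.

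The key step is rescaling. The event $|F(\mathbf{S}) - \hat{F}_m(\mathbf{S})| > \eps$ is the same as $\bigl|\sum_i X_i - \mathbb{E}[\sum_i X_i]\bigr| > m\eps$. Applying \cref{thm:hoeffding} with deviation $m\eps$ gives
\[
\Pr\left[ \left| F(\mathbf{S}) - \hat{F}_m(\mathbf{S}) \right| > \eps \right] \leq 2\exp\left( -\frac{2 m^2 \eps^2}{m |\mathbf{N}|^2} \right) = 2\exp\left( -\frac{2 m \eps^2}{|\mathbf{N}|^2} \right).
\]

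It remains to plug in the stated $m$. Since $m \geq \frac{|\mathbf{N}|^2}{2\eps^2} \ln \frac{2^{|\mathbf{P}|+1}}{\delta}$, the exponent is at least $\ln \frac{2^{|\mathbf{P}|+1}}{\delta}$. Hence the bound is at most $2 \cdot \frac{\delta}{2^{|\mathbf{P}|+1}} = \delta / 2^{|\mathbf{P}|} \leq \delta$, which proves the per-set claim.

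We would also remark that this stronger bound $\delta/2^{|\mathbf{P}|}$ is exactly why the $2^{|\mathbf{P}|+1}$ factor appears in $m$. A union bound over all $2^{|\mathbf{P}|}$ subsets $\mathbf{S} \subseteq \mathbf{P}$ then gives uniform concentration with probability at least $1-\delta$. This uniformity is what \cref{thm:approximation_ratio} needs, since $\hat{\mathbf{S}}$ is chosen from the same samples. There is no real obstacle here. The only points requiring care are fixing $\mathbf{S}$ before sampling, so that independence holds, and scaling the deviation by $m$ correctly.
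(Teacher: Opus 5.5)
Your proof is correct and follows the paper's argument: Hoeffding applied to $X_i = f_{\mathcal{H}_i}(\mathbf{S})$ with range $[0,|\mathbf{N}|]$ and deviation $m\eps$, then substituting the stated $m$, with the $2^{|\mathbf{P}|}$ union bound accounting for the extra factor in $m$. Your closing remark is slightly sharper than the paper's write-up, which bounds $\max_{\mathbf{S}}\Pr[\cdot]$ rather than $\Pr[\exists\,\mathbf{S}:\cdot]$; the latter uniform bound is what \cref{thm:approximation_ratio} actually needs, because $\hat{\mathbf{S}}$ is chosen from the same samples.
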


\section{Experiments}
\label{sec:experiments}

We provide empirical evaluations of \texttt{CAST} compared to baseline methods through retrospective simulation studies on real-world infectious disease datasets.
Key experimental settings are described in the main text and \cref{sec:parameters}.
Across all runs in our experiments, the longest runtime for any algorithm was $\sim$30 minutes on an M4 Pro MacBook with 24 GB of memory, well within practical time limits for real use cases.
Our code will be published in a public GitHub repository in the final version of the paper.

\textbf{Datasets.}
We use publicly available anonymized real-world network data from ICPSR \cite{ICPSR} for our experiments. This dataset contains networks for several sexually transmitted infections including HIV, Syphilis, and Hepatitis. In addition to network connections and disease status, each node in a network is also associated with a covariate vector.
Additional details are provided in \cref{sec:parameters}.

\textbf{Baselines.}
We consider the following baselines in our experiments:

\begin{enumerate}[leftmargin=*]
    \item \textbf{No Blocking}: 
    Baseline for how many new infections occur without any intervention.
    \item \textbf{Random}: 
    Randomly selects a blocker set $\mathbf S$ with the maximal size allowed by the budget. 
    \item \textbf{Highest Degree}: This algorithm selects blocker nodes from $\mathbf{P}$ with the greatest unweighted degrees, i.e., individuals with the greatest number of contacts. This is a popular heuristic that has been previously studied for interventions on infectious disease networks \cite{liu2021efficient, brazia2026reconstructing}. 
    \item \textbf{SandIMIN}: We adapt the SandIMIN influence minimization algorithm \cite{wang2024efficient} with Monte-Carlo sampling to the \texttt{MINT} problem. We use weighted degree as the lightweight heuristic.
    \item \textbf{Greedy}: This algorithm greedily constructs $\mathbf{S}$ by adding nodes one by one with greatest marginal reduction in expected spread, which is estimated through Monte-Carlo realization samples. We include this baseline to evaluate naive minimization of the \texttt{MINT} objective.
    \item \textbf{IM}: This algorithm adapts the greedy influence maximization algorithm to the \texttt{MINT} problem. It first assumes that all nodes in $\mathbf{P}$ are blocked, then adds nodes one by one to $\mathbf{S}$ with greatest marginal increase in spread. This baseline demonstrates the misalignment between the objectives in \texttt{MINT} vs. \texttt{IM}, as discussed in \cref{prop:distinct}.
\end{enumerate}

\textbf{Known networks.}
For these experiments, we follow the steps below:
\begin{enumerate}[leftmargin=*]
    \item We convert a network from the ICPSR dataset into a directed transmission graph. All edges from the dataset are converted to bidirectional edges, then any edge going into a node in $\mathbf P$ is removed. This is because no node can transmit HIV to nodes in $\mathbf P$, since nodes in $\mathbf P$ are already individuals with HIV (as discussed in \cref{sec:mint}).
    \item We generate $a = 10$ weighted graphs $\mathcal{G}_1, \mathcal{G}_2, ..., \mathcal{G}_a$. In each graph, all edge probabilities are initialized independently by sampling an edge probability distribution $\mathcal{P}$.
    \item For each $i \in \{1, ..., a\}$, algorithms are given $\mathcal{G}_i$, intervention budget $k$, and $b = 50$ Monte-Carlo realizations for each single run on a single graph, and then they output a blocker set $\mathbf S_i$. 
    \item For each run on $\mathcal{G}_i$, average new infections are measured via IC with $c = 50$ Monte-Carlo realizations from $\mathcal{G}_i$. 
    We then report the mean $\pm$ SE of average new infections across the $a$ graphs.
    
\end{enumerate}

\cref{fig:core-result} shows performance of all algorithms on the HIV dataset under uniform random edge probabilities and varying intervention budgets.
Although our target application is in HIV prevention, we also provide experimental results on the datasets for Syphilis and Hepatitis, for which prevention efforts also face major resource constraints.
For the experiment on the HIV network, we make three main observations.
First, random interventions perform substantially worse than all other baselines, underscoring the importance of targeted, network-aware strategies.
Second, \texttt{CAST} consistently matches or outperforms all baselines, with the largest gains at low intervention budget regimes. 
For example, when intervening on only 20\% of all HIV-positive individuals, \texttt{CAST} incurs 56.1\% less average new infections than the next best baseline (Greedy) and 88.3\% less than No Blocking.
Third, \texttt{CAST} remains robust across budgets and exhibits low standard error across edge probability initializations.
\texttt{CAST} also maintains the strongest comparative advantage in other disease networks, especially for lower intervention budgets. 
For example, \texttt{CAST} incurs 45.9\% and 41.2\% less average new infections than the best baseline at the 20\% budget for the Syphilis and Hepatitis networks, respectively.

\begin{figure}
    \centering
    \includegraphics[width=\linewidth]{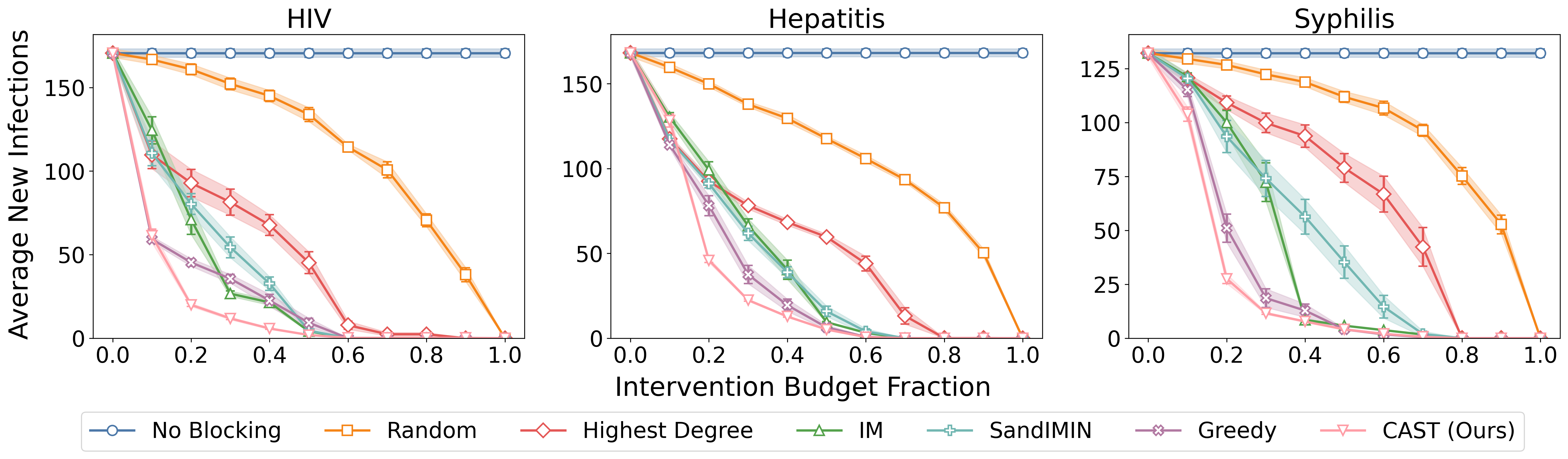}
    \caption{Average new infections ($\pm$ SE) for the HIV (left), Hepatitis (middle), and Syphilis (right) networks, under uniform random edge probabilities and varying intervention budgets.
    }
    \label{fig:core-result}
\end{figure}



\textbf{Ablations on the edge transmission probabilities.}
To study the robustness of our empirical evaluations, we consider alternative edge probability initializations on the HIV network, at the fixed intervention budget of $0.2$; see the first two plots in \cref{fig:ablations}.
We consider
(i) uniform edge probabilities, where $w_{(v,u)} = p$ for all edges with $p$ varying from $0.2$ to $0.8$, and
(ii) Gaussian edge probabilities, where $w_{(v,u)} \sim \mathcal{N}(\mu, \sigma^2)$ clipped to [0, 1] with $\sigma = 0.2$ and $\mu$ varying from $0.2$ to $0.8$.
As expected, the average number of new infections increases with edge probabilities, reflecting higher transmission likelihood.
Nevertheless, \texttt{CAST} remains consistently robust and substantially outperforms all baselines, with its advantage becoming more pronounced at higher transmission rates.
For instance, at $p=\mu=0.5$, \texttt{CAST} achieves $59.7\%$ (uniform) and $60.2\%$ (Gaussian) fewer average new infections compared to the next best baseline.
These gains improve to $81.3\%$ and $82.4\%$ respectively at $p=\mu=0.8$.


\begin{figure}
    \centering
    \includegraphics[width=\linewidth]{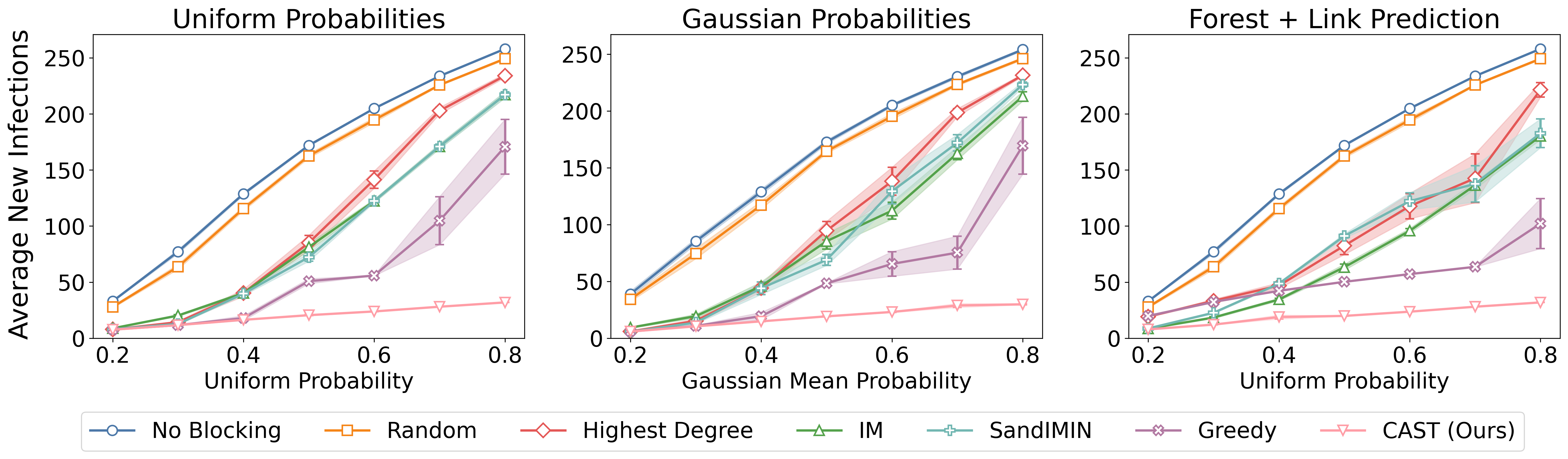}
    \caption{Average new infections ($\pm$ SE) on the HIV network with intervention budget fixed at 0.2 and varying edge probability initializations. We evaluate different uniform edge probabilities (left) and Gaussian edge initializations with different means (middle), with fixed $\sigma^2 = 0.2^2$. We also study performance across different uniform edge probabilities when only the forest subgraph of the transmission network is known, which we supplement with link prediction (right).}
    \label{fig:ablations}
\end{figure}


\textbf{Imperfect networks.} One practical challenge that network-based strategies face in general is in collecting network data. Thus far, our experiments assumed full observability of both the network structure and edge probabilities. In the following, we challenge these assumptions through evaluations focused on more practical problem instances. We use $\mathcal{G}^\prime$ to denote the decision graph that algorithms use to make decisions and $\mathcal{G}$ to denote the true graph that is used to evaluate those decisions. 

First, we consider imperfect network structures. In order to observe the full network structure, a public health official must expend significant time and resources on contact tracing, which may be prohibitively expensive. One option that can be more feasible in populations with high ongoing HIV risk is a peer-to-peer referral system known as Respondent-Driven Sampling (RDS). In these systems, individuals will recursively refer many of their relevant contacts to HIV services, allowing edges to be collected through referrals. However, individuals are unlikely to or may not be allowed to accept multiple referrals to the same service from different individuals. As observed in prior work \cite{stein2018stochastic, kangaslahti2026policy}, the resulting recorded network has a forest structure and the remaining edges are never recorded. Although some studies have recognized this observability gap and tried to predict hidden parts of the network structure \cite{chen2016seeing, crawford2018hidden}, they fall short of evaluating how well these predicted networks support downstream decision-making. Thus, it remains largely unclear how useful learned completions of forest structures can practically be in downstream applications such as network-based interventions.

To address this, we study a predict-then-optimize framework \cite{elmachtoub2022smart} for solving \texttt{MINT} given only the forest subgraph of the transmission network. Though we lose the performance guarantees of \texttt{CAST} under this model, we empirically evaluate how well it adapts compared to other baselines. Specifically, our framework operates as follows:
\begin{enumerate}[leftmargin=*]
    \item We perform a breadth-first search projection on each component of the HIV transmission network $\mathcal G$ to form $\mathcal F$, the referral graph with forest structure. $\mathcal F$ represents the subgraph of $\mathcal{G}$ that we could cheaply collect through peer-to-peer referrals.
    \item Using the complete networks from the other ICPSR disease network datasets, we train a probabilistic link prediction model $\ell(v, u)$ that predicts the probability of the existence of an edge between any node pair $(v, u)$ within a shared connected component.
    \item We use $\ell$ to infer edge probabilities between all node pairs within $\mathcal F$ that share the same connected component and do not already have an existing edge in $\mathcal F$. We then normalize these predicted probabilities to the scale of the existing edge probabilities in $\mathcal F$ and add the resulting weighted edges to the edge set of $\mathcal F$, forming the decision graph $\mathcal{G}^\prime$. 
    \item We task blocking algorithms to solve \texttt{MINT} on $\mathcal{G}^\prime$ and evaluate decisions on the true $\mathcal{G}$.
\end{enumerate}

We use an SVM-based link prediction model~\cite{scikit-learn}; implementation details are provided in \cref{sec:parameters}.
The rightmost plot in \cref{fig:ablations} reports results under this setting.
We evaluate using uniform edge probabilities for ease of normalization, though other distributions can be incorporated by scaling predicted probabilities with samples from the edge probability distribution.
Compared to the setting with full network knowledge (first plot in \cref{fig:ablations}), all methods --- including \texttt{CAST} --- exhibit only modest performance losses on graphs with low uniform edge probabilities.
For example, \texttt{CAST} incurs 4.08\%, 3.90\%, and 15.4\% more average new infections under uniform edge probabilities of 0.2, 0.3, and 0.4 (respectively).
At edge uniform probabilities of 0.6 and higher, \texttt{CAST} has roughly equal performance compared to the case with full network knowledge.
This suggests that a forest backbone obtained via peer referrals, augmented with link prediction, can serve as a cost-effective substitute for complete network reconstruction.
Since accurately estimating transmission risk on every edge can be expensive, we also study robustness to imperfect edge probabilities.
More specifically, we consider two forms of mismatch between true and assumed edge probabilities and find that \texttt{CAST} remains robust in both cases.
Details are deferred to \cref{sec:edge-probs}.



\section{Conclusion and Discussion}

Inspired by the real-world challenges faced by our public health collaborators in achieving and maintaining HIV epidemic control under growing resource constraints, we study the problem of optimally allocating limited interventions --- such as adherence counseling and high-efficacy treatment --- to minimize transmission cascades.
In collaboration with these domain experts, we formalize this challenge as the \texttt{MINT} problem, bridging real-world epidemiological objectives with constrained optimization.
We then propose \texttt{CAST} (\cref{alg:known}), a polynomial-time $(\delta, \epsilon)$-approximation algorithm with a $2\sqrt{|\mathbf{P}|}$ approximation guarantee.
Empirical results on real-world networks show that \texttt{CAST} consistently outperforms established baselines and remains robust to other resource-constrained diseases, imperfect network information, and varying transmission dynamics.



While we are primarily motivated by HIV treatment, the \texttt{MINT} framework applies broadly to other diseases and also beyond public health applications.
In social networks, \texttt{MINT} could model the deplatforming or suspension of a limited set of influencers ($\mathbf{P}$) to protect the susceptible user base ($\mathbf{N} = \mathbf{V} \setminus \mathbf{P}$) from the diffusion of harmful atrocity speech, e.g., \cite{jhaver2021deplatforming}.
In cybersecurity settings, \texttt{MINT} could model the removal of a subset of infected machines ($\mathbf{P}$) to minimize the expected infection of the broader network ($\mathbf{N} = \mathbf{V} \setminus \mathbf{P}$), e.g., \cite{karakose2018node}.

Discussion of broader impacts, ethics, fairness, and limitations to be addressed in future work are provided in \cref{sec:discussion}.

\bibliography{refs}
\bibliographystyle{alpha}


\appendix
\input{appendix-proofs}
\input{appendix-experiments}
\input{appendix-discussion}



\end{document}

%% file: different-from-influence-maximization.tex
\begin{tikzpicture}
\node[draw, thick, circle, minimum size=15pt, inner sep=0pt] at (0,0) (a) {$a$};
\node[draw, thick, circle, minimum size=15pt, inner sep=0pt] at (0,-1) (e) {};
\node[draw, thick, circle, minimum size=15pt, inner sep=0pt] at (2,0) (b) {$b$};
\node[draw, thick, circle, minimum size=15pt, inner sep=0pt] at (3,1) (c) {$c$};
\node[draw, thick, circle, minimum size=15pt, inner sep=0pt] at (3,-1) (d) {$d$};
\node[draw, thick, circle, minimum size=15pt, inner sep=0pt] at (4,0) (f) {};
\node[draw, thick, circle, minimum size=15pt, inner sep=0pt] at (5,0) (g) {};
\node[draw, thick, circle, minimum size=15pt, inner sep=0pt] at (2,-1) (h) {};
\node[draw, thick, circle, minimum size=15pt, inner sep=0pt] at (3,0) (i) {};
\node[draw, thick, circle, minimum size=15pt, inner sep=0pt] at (4,-1) (j) {};

\draw[thick] (a) -- (e);
\draw[thick] (b) -- (c);
\draw[thick] (b) -- (d);
\draw[thick] (c) -- (f);
\draw[thick] (c) -- (g);
\draw[thick] (d) -- (f);
\draw[thick] (d) -- (g);
\draw[thick] (d) -- (h);
\draw[thick] (d) -- (i);
\draw[thick] (d) -- (j);
\end{tikzpicture}

%% file: appendix-proofs.tex
\section{Deferred proofs}
\label{sec:appendix-proofs}

\Fissubmodular*
\begin{proof}
For any arbitrary realization $\mathcal{H} \sim \mathcal{G}$ and negative node $u \in \mathbf{N}$, define indicator functions
\begin{equation}
\label{eq:g}
g_{\mathcal{H}, u}(\mathbf{S})
= \mathbbm{1} \left[ \text{$u$ is reachable from $\mathbf{P} \setminus \mathbf{S}$ in $\mathcal{H}[\mathbf{V} \setminus \mathbf{S}]$} \right]
\end{equation}
Then, we can rewrite $F(\mathbf{S})$ as follows:
\begin{align*}
F(\mathbf{S})
&= \mathbb{E}_{\mathcal{H} \sim \mathcal{G}} \left[ f_{\mathcal{H}}(\mathbf{S}) \right] \tag{By \cref{eq:F}}\\
&= \mathbb{E}_{\mathcal{H} \sim \mathcal{G}} \left[ \sum_{u \in \mathbf{N}} g_{\mathcal{H}, u}(\mathbf{S}) \right] \tag{By \cref{eq:realized-f} and \cref{eq:g}}
\end{align*}
Since non-negativity, monotonicity, and submodularity are preserved under non-negative linear combinations, it suffices to show that $g_{\mathcal{H},u}$ is non-negative, monotonically non-increasing, and submodular for every realization $\mathcal{H}$ and every $u \in \mathbf{N}$.

In the remainder of the proof, let us fix an arbitrary realization $\mathcal{H} \sim \mathcal{G}$, an arbitrary negative node $u \in \mathbf{N}$, arbitrary subsets $\mathbf{A} \subseteq \mathbf{B} \subseteq \mathbf{P}$, and an arbitrary positive node $x \in \mathbf{P} \setminus \mathbf{B}$.

\paragraph{Non-negativity and monotonically non-increasing.}
Since $g_{\mathcal{H},u}(\mathbf{S}) \in \{0,1\}$ is an indicator function, it is non-negative.
Furthermore, treating more positive nodes can only decrease this indicator function: since $\mathcal{H}[\mathbf{V} \setminus \mathbf{B}]$ is a subgraph of $\mathcal{H}[\mathbf{V} \setminus \mathbf{A}]$ and $\mathbf{P} \setminus \mathbf{B} \subseteq \mathbf{P} \setminus \mathbf{A}$, reachability can only decrease, so $g_{\mathcal{H},u}(\mathbf{A}) \geq g_{\mathcal{H},u}(\mathbf{B})$.

\paragraph{Submodularity.}
We need to establish
\[
g_{\mathcal{H},u}(\mathbf{A} \cup \{x\}) - g_{\mathcal{H},u}(\mathbf{A})
\geq g_{\mathcal{H},u}(\mathbf{B} \cup \{x\}) - g_{\mathcal{H},u}(\mathbf{B})
\]
Since $g_{\mathcal{H},u}$ is monotonically non-increasing, we always have
\[
g_{\mathcal{H},u}(\mathbf{A})
\geq g_{\mathcal{H},u}(\mathbf{B})
\geq g_{\mathcal{H},u}(\mathbf{B} \cup \{x\})
\quad
\text{and}
\quad
g_{\mathcal{H},u}(\mathbf{A})
\geq g_{\mathcal{H},u}(\mathbf{A} \cup \{x\})
\geq g_{\mathcal{H},u}(\mathbf{B} \cup \{x\})
\]
Since $g_{\mathcal{H},u}$ is $\{0,1\}$-valued, we see that $g_{\mathcal{H},u}(\mathbf{B} \cup \{x\}) - g_{\mathcal{H},u}(\mathbf{B}) \leq 0$, so the only nontrivial case to verify is that $g_{\mathcal{H},u}(\mathbf{B}) = 1$ and $g_{\mathcal{H},u}(\mathbf{B} \cup \{x\}) = 0$ when $g_{\mathcal{H},u}(\mathbf{A}) = 1$ and $g_{\mathcal{H},u}(\mathbf{A} \cup \{x\}) = 0$.

In this case, in $\mathcal{H}[\mathbf{V} \setminus \mathbf{A}]$, the node $u$ is reachable from some untreated positive, but after removing $x$, it is no longer reachable from any untreated positive.
Therefore, $x$ is the unique untreated positive from which $u$ is reachable in $\mathcal{H}[\mathbf{V} \setminus \mathbf{A}]$.
Hence, there exists a directed path from $x$ to $u$ in $\mathcal{H}[\mathbf{V} \setminus \mathbf{A}]$.
Since positive nodes have no incoming edges, all internal vertices on this path must be negative.
Therefore, removing additional positive nodes in $\mathbf{B} \setminus \mathbf{A}$ does not affect this path, and it remains present in $\mathcal{H}[\mathbf{V} \setminus \mathbf{B}]$.
Thus, $u$ is still reachable from $x$ in $\mathcal{H}[\mathbf{V} \setminus \mathbf{B}]$, implying that $g_{\mathcal{H},u}(\mathbf{B}) = 1$ and $g_{\mathcal{H},u}(\mathbf{B} \cup \{x\}) = 0$, verifying that $g_{\mathcal{H},u}$ is indeed submodular.
\end{proof}

Before we prove \cref{thm:approximation_ratio}, we first prove the remaining supporting lemma.


\algoanalysis*
\begin{proof}
For a fixed $\mathbf{S} \subseteq \mathbf{P}$, define random variables $X_i = f_{\mathcal{H}_i}(\mathbf{S})$, for $i \in \{1, \ldots, m\}$.
We see that these variables are independent and satisfy $0 \leq X_i \leq |\mathbf{N}|$.
Moreover, $\mathbb{E}[X_i] = F(\mathbf{S})$, and $\hat{F}_m(\mathbf{S}) = \frac{1}{m} \sum_{i=1}^m X_i$.
Applying Hoeffding's inequality (see \cref{thm:hoeffding}) yields
\[
\Pr \left( \left| F(\mathbf{S}) - \hat{F}_m(\mathbf{S}) \right|> \eps \right)
= \Pr \left( \left| m F(\mathbf{S}) - m \hat{F}_m(\mathbf{S}) \right| > m \eps \right)
\leq 2 \exp \left(-\frac{2m\eps^2}{|\mathbf{N}|^2} \right)
\]
Applying a union bound over all possible $2^{|\mathbf P|}$ subsets $\mathbf{S} \subseteq \mathbf{P}$ gives
\[
\max_{\mathbf{S} \subseteq \mathbf{P}} \Pr \left( \left| F(\mathbf{S}) - \hat{F}_m(\mathbf{S}) \right|> \eps \right) \leq 2^{|\mathbf P| + 1} \exp \left(-\frac{2m\eps^2}{|\mathbf{N}|^2} \right)
\]
The claim follows by plugging in $m$ as defined in the statement.
\end{proof}

We are now ready to prove \cref{thm:approximation_ratio}.

\approximationratio*
\begin{proof}
Define $\hat{F}_m$ as in \cref{alg:known}, where $\tilde{\mathbf{S}}$ is an optimizer.
Then, with probability at least $1 - \delta$,
\begin{align*}
F(\hat{\mathbf{S}})
&\leq \hat{F}_m(\hat{\mathbf{S}}) + \eps \tag{By \cref{lem:algo-analysis}}\\
&\leq 2 \sqrt{|\mathbf{P}|} \cdot \hat{F}_m(\tilde{\mathbf{S}}) + \eps \tag{By $2 \sqrt{|\mathbf{P}|}$-approximation \texttt{MkU} algorithm of \cite{chlamtac2018densest}}\\
&\leq 2 \sqrt{|\mathbf{P}|} \cdot \hat{F}_m(\mathbf{S}^\star) + \eps \tag{Since $\tilde{\mathbf{S}}$ is minimizer for $\hat{F}_m$}\\
&\leq 2 \sqrt{|\mathbf{P}|} \cdot (F(\mathbf{S}^\star) + \eps) + \eps \tag{By \cref{lem:algo-analysis}}\\
&\leq 2 \sqrt{|\mathbf{P}|} \cdot F(\mathbf{S}^\star) + \eps \cdot (1 + 2 \sqrt{|\mathbf{P}|})
\end{align*}
As each invocation of \cref{lem:algo-analysis} fails with probability at most $\delta$, union bound tells us that the total failure probability is at most $2 \delta$.

Since the MkU algorithm from \cite{chlamtac2018densest} runs in polynomial time and additional steps are polynomial time operations, \texttt{CAST} also runs in polynomial time.
\end{proof}

%% file: appendix-experiments.tex
\section{Implementation Details}
\label{sec:parameters}
\paragraph{Dataset Details.}
 We examine the HIV, Syphilis, and Hepatitis networks from the ICPSR dataset. The de-identified, public-use dataset can be downloaded at \url{https://www.icpsr.umich.edu/web/ICPSR/studies/22140} after agreeing to the terms and conditions. These networks contain 778, 542, and 1732 total nodes, of which 88, 44, and 117 are initially infected, respectively. Each node is associated with a covariate vector with 72 binary dimensions that are one-hot encoded from 17 categorical variables related to gender, sexual orientation, occupation, etc.

 \paragraph{Link Prediction for Imperfect Networks.}
 For our link prediction model $\ell$, we use an SVM with an RBF kernel \cite{scikit-learn} trained on the other disease network datasets from the ICPSR dataset. We pass it the cosine similarity between node covariates, the status of each node, an indicator of whether the two nodes share a neighbor, and the size of their shared connected component as features. We only predict edges between nodes within the same connected component, since the connected components are established in the given forest subgraph. In practice, keeping all predicted edges in $\mathcal{G}^\prime$ turns each connected component into a complete graph, so we drop any edges with a predicted probability below a threshold $p_\ell$. We use $p_\ell = 0.25$ in our experiments. 

\section{Imperfect Edge Probabilities}
\label{sec:edge-probs}
For these experiments, the decision graph $\mathcal{G}^\prime$ has the same topology as the true evaluation graph $\mathcal G$. However, edge probabilities in $\mathcal{G}^\prime$ are drawn from $\mathcal{P}^\prime$, an independent and separate distribution from $\mathcal{P}$, which is used to generate edge probabilities in $\mathcal{G}$. In the first case that we study, edge probabilities in $\mathcal{G}^\prime$ are drawn independently but from the same distribution as those in $\mathcal{G}$. In the second case, edge probabilities in $\mathcal{G}^\prime$ are drawn independently and from a completely different distribution. We use $\mathcal{P}^\prime = N(0.75, 0.2^2), \mathcal{P} = N(0.75, 0.2^2)$ in the first case and $\mathcal{P}^\prime = N(0.75, 0.2^2), \mathcal{P} = U(0, 1)$ in the second case. The case where the true edge probabilities are known to the algorithm, i.e., $\mathcal{P}^\prime = \mathcal{P} = N(0.75, 0.2^2)$ is also evaluated for reference. The results across various intervention budgets on the HIV network are shown in \cref{fig:edge_distributions}.

\begin{figure}
    \centering
    \includegraphics[width=\linewidth]{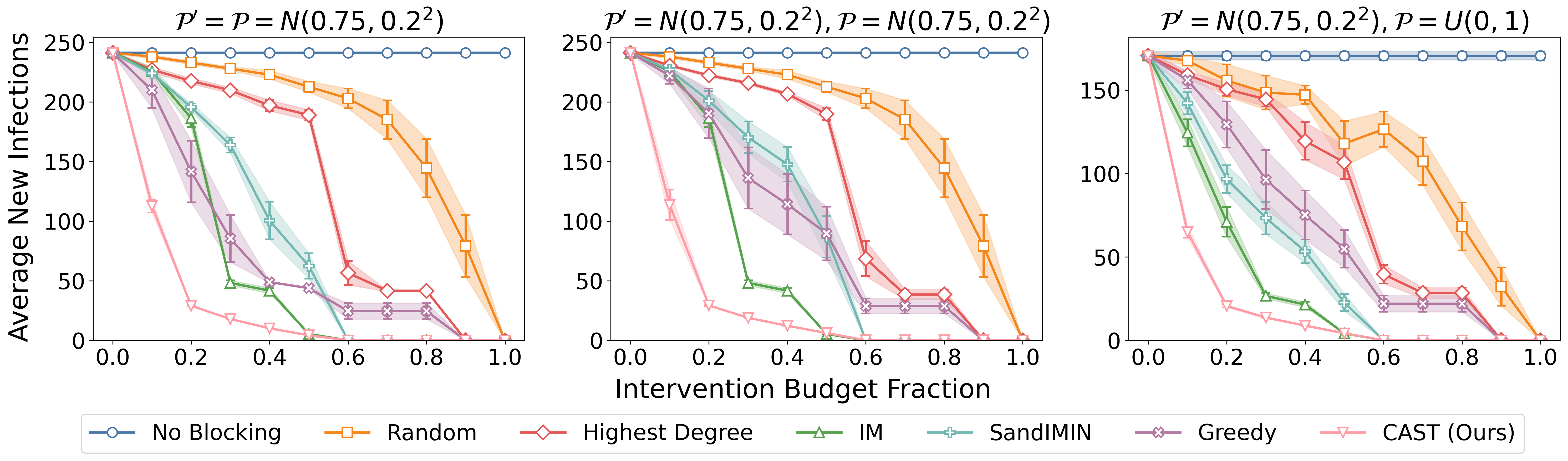}
    \caption{Average new infections ($\pm$ SE) across varying intervention budgets with imperfect edge probabilities. First, for reference, we show the case where true edge probabilities are known, i.e., $\mathcal{P}^\prime = \mathcal{P} = N(0.75, 0.2^2)$ (left). Then, we study the case where the distribution is known but the decision edge probabilities in $\mathcal{G}^\prime$ are drawn independently from those in the evaluation graph $\mathcal G$, i.e., $\mathcal{P}^\prime = N(0.75, 0.2^2), \mathcal{P} = N(0.75, 0.2^2)$ (middle). Finally, we study the case where the assumed edge probability distribution is mismatched with the true edge probability distribution, i.e., $\mathcal{P}^\prime = N(0.75, 0.2^2), \mathcal{P} = U(0, 1)$ (right).}
    \label{fig:edge_distributions}
\end{figure}

We observe that empirically, the performance of \texttt{CAST} remains stable and superior to baseline algorithms in the case where the distribution is known but independently sampled, i.e., $\mathcal{P}^\prime = N(0.75, 0.2^2), \mathcal{P} = N(0.75, 0.2^2)$. For example, under the 10\% and 20\% budgets, \texttt{CAST} only incurs 0.879\% and 0.875\% more average new infections (respectively), compared to the case where the true edge probabilities are known. For the case of $\mathcal{P}^\prime = N(0.75, 0.2^2), \mathcal{P} = U(0, 1)$, we can compare to the HIV plot in \cref{fig:core-result}, in which the same true edge probabilities were known to algorithms, i.e., $\mathcal{P}^\prime = \mathcal{P} = U(0, 1)$. Once again, \texttt{CAST} has the best performance across all algorithms. At 10\% and 20\% budgets, \texttt{CAST} only incurs 4.60\% and 0.430\% more average new infections (respectively), compared to the case where the true edge probabilities are known from \cref{fig:core-result}.

%% file: appendix-discussion.tex
\section{Discussion}
\label{sec:discussion}

\paragraph{Broader impact and privacy.}
As public health budgets face increasing strain \cite{ten2025impact}, resource-efficient strategies are essential.
\texttt{CAST} aligns with WHO network-based intervention guidelines \cite{who}, offering a mathematically rigorous tool for maximizing the impact of every intervention.
However, the sensitive nature of HIV status and transmission data necessitates robust privacy guardrails.
Consequently, any practical application of \texttt{CAST} must employ anonymization and secure computational environments to protect these sensitive pieces of information.

\paragraph{Ethics and fairness.} 
The work that we present here is limited to retrospective simulation studies on disease networks collected in the real world. 
There are no human subjects involved.
Before moving beyond simulations, we intend to conduct rigorous testing through pilot studies and field trials to observe algorithmic resource allocation. These trials will require IRB approvals at all academic institutions involved, and careful, informed consent will be sought from any participants.
We are working very closely with collaborators who have decades of experience in HIV prevention in the field in South Africa, are familiar with real-world resource allocation challenges, and will ultimately lead any algorithmic trials.
There is one potential ethical consideration that may arise due to the utilitarian objective of \texttt{MINT}: by prioritizing the prevention of large-scale infection cascades, one may naturally de-prioritize individuals in smaller, more isolated components.
While this maximizes the total number of infections averted, it may inadvertently introduce geographic or demographic disparities in resource access. 
All of the above considerations will be carefully weighed and discussed with our collaborators.
We will also proceed with appropriate IRB oversight across the different institutions involved.

\paragraph{Limitations and future work.}
Our theoretical guarantees and empirical findings are well-suited to our target setting with a one-shot intervention.
However, one limitation of the \texttt{MINT} model is that it does not capture sequential interventions in which agencies can dynamically update their strategies as budget constraints evolve and the transmission network changes.
Future work could focus on a new model for these adaptive problem instances.
SI may be a more appropriate model of spread for such problems, since it is often used in problems with multiple rounds of interventions, e.g., \cite{ou2021active}.
Additionally, investigating fairness-constrained variants of \texttt{MINT} that balance global infection reduction with equitable coverage is also an important direction for future study.
